\documentclass[11pt]{article}

\usepackage[
  letterpaper,
  margin=1in,
  headheight=14pt
]{geometry}

\usepackage[T1]{fontenc}
\usepackage[utf8]{inputenc}
\usepackage[final]{microtype}
\usepackage{amsmath,amssymb,amsthm,mathtools,bm}
\usepackage{booktabs}
\usepackage{enumitem}
\setlist{itemsep=2pt,topsep=4pt}

\usepackage{xcolor}
\definecolor{linkblue}{RGB}{0,70,150}

\usepackage[
  colorlinks=true,
  linkcolor=linkblue,
  citecolor=linkblue,
  urlcolor=linkblue
]{hyperref}

\usepackage{titlesec}
\titleformat{\section}
  {\large\scshape\bfseries}
  {\thesection}{0.75em}{}
\titleformat{\subsection}
  {\normalsize\bfseries}
  {\thesubsection}{0.65em}{}
\titlespacing*{\section}{0pt}{2.3ex plus .4ex minus .2ex}{1.1ex}
\titlespacing*{\subsection}{0pt}{1.8ex plus .3ex minus .2ex}{0.8ex}

\usepackage{fancyhdr}
\usepackage{aliascnt}

\theoremstyle{plain}
\newtheorem{theorem}{Theorem}[section]

\newaliascnt{lemma}{theorem}
\newtheorem{lemma}[lemma]{Lemma}
\aliascntresetthe{lemma}

\newaliascnt{proposition}{theorem}

\aliascntresetthe{proposition}

\newaliascnt{corollary}{theorem}

\aliascntresetthe{corollary}

\theoremstyle{definition}
\newaliascnt{assumption}{theorem}
\newtheorem{assumption}[assumption]{Assumption}
\aliascntresetthe{assumption}

\newaliascnt{definition}{theorem}
\newtheorem{definition}[definition]{Definition}
\aliascntresetthe{definition}

\newaliascnt{example}{theorem}

\aliascntresetthe{example}

\theoremstyle{remark}
\newaliascnt{remark}{theorem}
\newtheorem{remark}[remark]{Remark}
\aliascntresetthe{remark}

\usepackage[nameinlink,capitalise,noabbrev]{cleveref}

\crefname{theorem}{theorem}{theorems}
\Crefname{theorem}{Theorem}{Theorems}
\crefname{lemma}{lemma}{lemmas}
\Crefname{lemma}{Lemma}{Lemmas}
\crefname{proposition}{proposition}{propositions}
\Crefname{proposition}{Proposition}{Propositions}
\crefname{corollary}{corollary}{corollaries}
\Crefname{corollary}{Corollary}{Corollaries}
\crefname{assumption}{assumption}{assumptions}
\Crefname{assumption}{Assumption}{Assumptions}
\crefname{definition}{definition}{definitions}
\Crefname{definition}{Definition}{Definitions}
\crefname{remark}{remark}{remarks}
\Crefname{remark}{Remark}{Remarks}
\crefname{example}{example}{examples}
\Crefname{example}{Example}{Examples}

\newcommand{\C}{\mathbb C}
\newcommand{\R}{\mathbb R}

\newcommand{\dd}{\,\mathrm d}
\newcommand{\eps}{\epsilon}

\newcommand{\bigO}{O}
\newcommand{\widetildeO}{\widetilde O}

\newcommand{\range}{\operatorname{range}}

\newcommand{\norm}[1]{\left\lVert #1\right\rVert}

\newcommand{\ket}[1]{\left|#1\right\rangle}
\newcommand{\bra}[1]{\left\langle#1\right|}

\newcommand{\abs}[1]{\left|#1\right|}

\newcommand{\ord}{\mathrm{ord} }

\title{Toward Efficient End-to-End Quantum Elliptic PDE Solvers: a Multilevel Correction Algorithm for Direct Observable Estimations}
\author{Xiantao Li \\
Pennsylvania State University\\
 \texttt{Xiantao.Li@psu.edu}
}

\begin{document}
\maketitle

\begin{abstract}
A central test case for quantum linear system algorithms (QLSA) is elliptic PDEs after a finite element discretization.  Most existing analyses focus on preparing a
normalized solution state. But an end-to-end quantum PDE
solver must also extract physical quantities of interest, such as fluxes, currents,
tractions, and energy.  These outputs require quantum
measurement, and their observable norms may grow like $h^{-\chi} $ with mesh size $h $, creating a
readout bottleneck even when a quantum preconditioner reduces the 
condition-number dependence on $h$.

We present a multilevel framework for this readout problem, motivated by the
variance-reduction mechanism of multilevel Monte Carlo (MLMC), which is naturally compatible with a multi-level finite element discretization.   Instead of estimating the full
fine-grid observable directly, the method estimates a telescoping sum of interlevel
corrections, so that the fine-coarse cancellation is exposed before quantum
measurement.  Our algorithm is based on Schur-complement factorization of the corrected
Green's operator through a Ritz-complement map.  For quantities of interest with readout order $\chi\leq 2$, the multilevel estimator removes
the polynomial $h$-dependent readout overhead.  With amplitude estimation, the remaining
statistical dependence is \(\widetilde O(\epsilon^{-1})\), i.e., Heisenberg scaling in the
inference precision up to logarithmic factors and with direct sampling, the complexity is reduced to standard Monte Carlo scaling
\(\widetilde O(\epsilon^{-2})\).
\end{abstract}

\section{Introduction}
\label{sec:introduction}

Quantum computing technology is progressing rapidly across several hardware platforms.
Recent milestones include noisy intermediate-scale demonstrations of quantum utility on
large superconducting processors, below-threshold logical memory experiments, and
manufacturable photonic quantum-computing modules
\cite{KimEtAl2023QuantumUtility,AcharyaEtAl2025BelowThreshold,
PsiQuantumTeam2025PhotonicPlatform}.  These developments make it increasingly important
to identify scientific-computing tasks for which quantum algorithms can provide
end-to-end gains, rather than speedups for isolated substeps.

Partial differential equations (PDEs) are a natural target.  They are central to scientific and
engineering computation, with great potential to broaden the scope of quantum computing applications. In particular, elliptic equations arise in mechanics, conductivity,
electrostatics and diffusion.  A representative model problem is
\[
-\nabla\cdot(a(x)\nabla u(x))=f(x)
\quad\text{in }\Omega,
\qquad
u|_{\partial\Omega}=0.
\]
After finite element discretization, such a problem typically leads to a
large sparse linear system. Quantum linear-system algorithms (QLSAs) and block-encoding methods 
\cite{HarrowHassidimLloyd2009,AnLin2022,TongAnWiebeLin2021, GilyenSuLowWiebe2019,Ambainis2012VTAA,CostaAnSandersSuBabbushBerry2022,LinTong2020} are the natural
algebraic tools for discretized elliptic PDEs.    Subsequent linear
system algorithms have improved precision dependence, block-encoding normalization,
pseudoinverse implementation, and condition-number dependence for important subclasses
\cite{ChildsKothariSomma2017,SubasiSommaOrsucci2019,OrsucciDunjko2021,TongAnWiebeLin2021,LapworthSunderhauf2025}. They have
been used in quantum algorithms for Poisson and elliptic equations, finite element
methods, and high-precision PDE solvers
\cite{CaoEtAl2013,MontanaroPallister2016,ChildsLiuOstrander2021,OrsucciDunjko2021}. 

For PDEs, the dependence of the condition number on the grid size $h$ is essential because it introduces a direct dependence on the size of the linear system. Leveraging the factorization of the stiffness matrix from a Galerkin projection, one can reduce the condition number dependence  from
\(h^{-2}\) to \(h^{-1}\)  \cite{OrsucciDunjko2021}. Meanwhile,  classical multigrid and
subspace-correction methods remove the \(h^{-2}\) conditioning barrier by exploiting the
nested hierarchy of finite element spaces \cite{BramblePasciakXu1990,Xu1992,Hackbusch1985,
TrottenbergOosterleeSchuller2001,jiang2025polynomial}.  In the quantum setting,  under suitable access assumptions, the BPX preconditioner \cite{BramblePasciakXu1990} can be realized by directly
block encoding the preconditioned finite element operator  \cite{DeimlPeterseim2025,FressartNowakSpillane2026}, or by mapping iteration schemes to unitary dynamics \cite{JinLiuMaYu2025Preconditioning,YangYuZhang2025}.

Quantum preconditioners give an
efficient route to the solve stage and linear-functional estimations.
An end-to-end quantum solver, however, proceeds in three stages \cite{MontanaroPallister2016}:
\[
\textbf{P}repare,\qquad
\textbf{S}imulate,\qquad
\textbf{I}nfer.
\]
The {\bf p}reparation stage loads the finite element right-hand side.  For structured data,
Grover--Rudolph-type binary trees and related state-preparation schemes can be efficient
\cite{GroverRudolph2002,KayeMosca2001} in that the circuit depth depend logarithmically on $h$.  These ideas have been incorporated in prior quantum PDE solvers as well \cite{DeimlPeterseim2025,OrsucciDunjko2021,MontanaroPallister2016}.  The
{\bf s}imulation stage applies a QLSA, a block-encoded inverse, or a related dynamical primitive. 
The {\bf i}nference stage extracts the desired physical output.  Similar end-to-end accounting
issues arise in other quantum scientific-computing pipelines
\cite{LiuLiWangLiu2024EndToEndProtein,MontanaroPallister2016}.  This paper focuses on the integration of inference and simulation stage for
elliptic PDEs to achieve optimal complexity scaling.

The inference stage has two distinct subtleties.  First, a QLSA prepares a normalized solution
state, not the finite element vector itself. The  norm is a separate scalar that
must be estimated \cite{dalzell2024shortcut}.  Second, physically meaningful PDE outputs are
usually scalar quantities of interest (QoIs), not the full solution vector.  Examples
include averaged fields, boundary fluxes, tractions, electric currents, and
 energies.  After discretization, these are linear or quadratic forms such as
\[
Q_h^{\rm lin}=\bm g_h^\dagger\bm x_h,
\qquad
Q_h^{\rm quad}=\bm x_h^\dagger M_h\bm x_h .
\]
More importantly, even if the elliptic solve has been preconditioned, the readout observable itself may have
block-encoding scale \(O(h^{-\chi})\). e.g.,
first-derivative readouts such as fluxes or currents typically have \(\chi=1\), and energy
quadratic forms may have \(\chi=2\).  Thus a single-level 
estimator following a QLSA step still pays \(\widetilde O(\epsilon^{-1}h^{-\chi})\) with amplitude estimation
and \(\widetilde O(\epsilon^{-2}h^{-2\chi})\) with standard sampling.  This is a readout
bottleneck rather than a linear-solver conditioning bottleneck.

We address this bottleneck by combining the multilevel Monte Carlo principle, which is naturally compatible with
multilevel finite element.  Classical MLMC uses the telescoping identity for QoIs at $L$ levels,
\[
Q_L=Q_0+\sum_{\ell=1}^{L}(Q_\ell-Q_{\ell-1})
\]
and couples the fine and coarse quantities so that the level differences are much smaller
than the full fine-grid output \cite{Giles2008MLMC,Giles2015MLMC}. Here it is assumed that $Q_0$ is at a very coarse level and its computational cost can be neglected. In quantum observable
estimation, the analogous gain is a reduction in the block-encoding normalization of the
operator being measured.  Estimating \(Q_\ell\) and \(Q_{\ell-1}\) separately and
subtracting the results classically does not reduce the measurement cost. Rather, the cancellation
must be represented inside the operator, i.e., inside the quantum circuit.

The key novel approach in this paper is the corrected Green's operator
\[
D_\ell
=
B_\ell-P_\ell B_{\ell-1}P_\ell^\dagger,
\qquad
B_\ell=A_\ell^{-1},
\]
which is the fine-grid inverse response after subtracting the lifted coarse Galerkin
response via a prolongation.  A direct subtraction of two inverse block encodings would not expose the
cancellation.  Instead, we prove the Ritz-complement Schur factorization
\[
D_\ell
=
J_\ell\Sigma_\ell^{-1}J_\ell^\dagger .
\]
Here \(J_\ell\) is a Ritz-complement  map and \(\Sigma_\ell\) is the associated
two-level Schur complement.  If \(J_\ell\) has block-encoding normalization \(O(h_\ell)\)
and \(\Sigma_\ell\) is uniformly conditioned, then \(D_\ell\) has normalization
\(O(h_\ell^2)\).  This two-power factor can be obtained for  \(P_1/Q_1\) \(L^2\)-response setup of the finite element methods.  The simulation and inference stages are therefore integrated:
the quantum circuit measures a corrected inverse-response operator whose scale has already
been reduced by the multilevel coupling. The procedure no longer requires a separate QLSA step. 

The comparison of the complexity is summarized in \Cref{tab:intro-comparison}.  The table is stated for
linear elements with standard \(O(h_\ell^2)\) $L^2$ error and for
\(0\le\chi\le2\).  It compares scalar QoI estimation costs; logarithmic factors are hidden
in \(\widetilde O(\cdot)\).  The second row represents positive-definite or factorized
QLSA methods that reduce the elliptic solve scale from \(h^{-2}\) to \(h^{-1}\) when their
structural assumptions apply \cite{OrsucciDunjko2021}.  The third row represents the  BPX quantum preconditioner approach  \cite{DeimlPeterseim2025}, where the solve scale is removed but the observable scale
remains, while the interface with quadratic QoIs is still open\cite{DeimlPeterseim2025}.    The fourth row is the Schur-complement multilevel estimator developed here,   which under the Ritz-complement oracle
assumptions and for observables up to the energy level, 
has Heisenberg scaling in the statistical precision, up to logarithmic factors, while the
direct-sampling version has the usual shot-noise scaling.  In both cases, the polynomial
mesh-dependent observable overhead is removed entirely. Montanaro and Pallister already adopted an end-to-end perspective by accounting for the
cost of estimating a linear functional of the finite element solution
\cite{MontanaroPallister2016}, and the \(\epsilon^{-1}\) dependence of amplitude estimation
is optimal in the usual query model, but did not consider the scaling in the observables. 

\begin{table}[t]
\centering
\begin{tabular}{c|c|c}
\toprule
Method & Amplitude estimation & Standard sampling \\
\midrule
Direct stiffness inverse estimator
&
\(\widetilde O(\epsilon^{-1}h^{-2-\chi})\)
&
\(\widetilde O(\epsilon^{-2}h^{-2-2\chi})\)
\\[2mm]
Positive-definite/factorized QLSA \cite{OrsucciDunjko2021}
&
\(\widetilde O(\epsilon^{-1}h^{-1-\chi})\)
&
\(\widetilde O(\epsilon^{-2}h^{-1-2\chi})\)
\\[2mm]
BPX linear functional estimator \cite{DeimlPeterseim2025}
&
\(\widetilde O(\epsilon^{-1}h^{-\chi})\)
&
\(\widetilde O(\epsilon^{-2}h^{-2\chi})\)
\\[2mm]
Schur-complement, \(0\le\chi\le2\) ({\bf This work})
&
\(\widetilde O(\epsilon^{-1})\)
&
\(\widetilde O(\epsilon^{-2})\)
\\
\bottomrule
\end{tabular}
\caption{Observable-estimation costs for elliptic finite element QoIs.  The first rows
show how mesh dependence enters through the solve and readout stages. Notice that the BPX approach in \cite{DeimlPeterseim2025} only applies to linear QoIs.  }
\label{tab:intro-comparison}
\end{table}

The rest of the paper is organized as follows.  In \Cref{sec:elliptic} we briefly review the elliptic
Galerkin discretization, the reduced scaling used for amplitude encoding, and representative
linear and quadratic QoIs, to get the problem ready for quantum algorithms.  \Cref{sec:quantum-primitives} collects the block-encoding,
QLSA, norm-estimation, and observable-estimation primitives needed for the overall complexity estimates.  In \Cref{sec:mlmc} we outline the MLMC allocation principle, redirected for quantum observable estimation.  The formulation of the Schur complement approach for QoI estimations is presented in \Cref{sec:two-level-estimators} and \Cref{subsec:J-examples} discusses
finite element constructions supporting the Ritz-complement access oracle.

\subsection*{Notation}

All finite-dimensional vectors are denoted by bold symbols, such as
\(\bm b,\bm x,\bm g\).  Matrices are denoted by capital letters.  We use
\(\dagger\) for the conjugate transpose of vectors and matrices, even when
all quantities are real.  The Euclidean vector norm is
\[
    \norm{\bm x}_2=(\bm x^\dagger \bm x)^{1/2}.
\]
The induced matrix norm, the spectral norm, of a matrix is denoted by \(\norm{A}_2\).  If \(A\) is
Hermitian positive definite, then
\[
    \kappa(A)=\norm{A}_2\norm{A^{-1}}_2 .
\]
For a rectangular matrix \(A \in \C^{m,n}\), $m>n,$ with full rank, its condition number is defined
by
\[
    \kappa(A)=\frac{\sigma_{\max}(A)}{\sigma_{\min}(A)},
\]
where \(\sigma_{\max}(A)\) and \(\sigma_{\min}(A)\) are the largest and
smallest nonzero singular values of \(A\).

The symbol \(\ell\) is reserved exclusively for discretization levels.  We
write \(V_\ell\) for the finite-dimensional space on level \(\ell\),
\(h_\ell\) for the corresponding mesh width, and \(A_\ell,\bm b_\ell,\bm
x_\ell\) for the associated discrete operator, right-hand side, and coefficient
vector, respectively.  The finest level is denoted by \(L\).  

For a domain \(\Omega\subset \mathbb R^d\), we use the standard Hilbert spaces
\[
    L^2(\Omega)
    =
    \left\{
        u:\Omega\to \R :
        \int_\Omega |u(x)|^2\,dx < \infty
    \right\},
\]
with inner product
\[
    (u,v)_{L^2(\Omega)}
    =
    \int_\Omega {u(x)}v(x)\,dx,
\]
and
\[
    H^1(\Omega)
    =
    \left\{
        u\in L^2(\Omega):
        \partial_{x_i}u\in L^2(\Omega),\ i=1,\dots,d
    \right\},
\]
with norm
\[
    \norm{u}_{H^1(\Omega)}
    =
    \left(
        \norm{u}_{L^2(\Omega)}^2
        +
        \norm{\nabla u}_{L^2(\Omega)}^2
    \right)^{1/2}.
\]
When homogeneous Dirichlet boundary conditions are imposed, the corresponding
energy space is \(H_0^1(\Omega)\). In most applications, $d=2$ or 3.

Bra-ket notation is reserved for normalized quantum states.  Thus
\(\ket{v}\) denotes
\[
    \ket{v}
    =
    \frac{\bm v}{\norm{\bm v}_2},
\]
unless \(\bm v\) is already normalized.  Physical coefficient vectors such as
\(\bm x_\ell\) are not identified with kets unless normalized explicitly.

An \(n\)-qubit quantum register has Hilbert-space dimension \(N=2^n\), with
computational basis states
\(
    \{\ket{j}:0\le j\le N-1\}.
\)
When a discrete PDE nodal coefficient vector $\bm x \in \C^N$
is encoded as a quantum state, we use the amplitude encoding
\begin{equation}\label{q-encode}
    \bm x
    \longmapsto
    \ket{x}
    =
    \frac{1}{\norm{\bm x}_2}
    \sum_{j=0}^{N-1} x_j \ket{j}.
\end{equation}

If the number of degrees of freedom is not exactly a power of two, the vector is
padded with zeros to dimension \(2^n\).  This padding is understood throughout
and does not change the Euclidean norm of the coefficient vector.

We write \(\bigO(\cdot)\) and \(\widetildeO(\cdot)\) with constants independent
of the mesh size $h$.  These constants may depend on the fixed spatial dimension $d$ of the PDE, 
shape-regularity constants, and the ellipticity coefficient
$a(x)$.  The notation
\(\widetildeO(\cdot)\) suppresses polylogarithmic factors in the relevant
 parameters.

\section{Finite element discretizations and quantities of interest}
\label{sec:elliptic}

Let \(\Omega\subset\R^d\) be a bounded polyhedral domain.  We consider the scalar elliptic model problem
\begin{equation}
-\nabla\cdot(a(x)\nabla u(x))=f(x),\qquad u|_{\partial\Omega}=0,
\label{eq:pde}
\end{equation}
where \(a(x)\in\R^{d\times d}\) is symmetric and uniformly elliptic $\forall \bm \xi \in \R^d$:
\[
a_{\min}\abs{\bm \xi}^2\le \bm \xi^\dagger a(x)\bm \xi\le a_{\max}\abs{\bm \xi}^2 .
\]
The weak formulation corresponds to finding \(u\in H_0^1(\Omega)\) such that
\begin{equation}
a(u,v):=\int_\Omega \nabla v(x)^\dagger a(x)\nabla u(x)\dd x
=
\int_\Omega f(x)v(x)\dd x=:F(v),
\label{eq:weak}
\end{equation}
for all \(v\in H_0^1(\Omega)\).  The Galerkin discretization and the finite element error estimates used below are standard; see, for example, \cite{BrennerScott2008}. The positive-definite quadratic form in \eqref{eq:weak} also induces an energy norm, $\norm{u}_a= \sqrt{a(u,u)},$ which is quite useful in the analysis. 

The multi-level structure of the PDE discretization is constructed from a sequence of gradually refined spaces \cite{BriggsHensonMcCormick2000}:
\begin{equation}\label{nested}
    V_0\subset V_1\subset\cdots\subset V_L\subset H_0^1(\Omega).
\end{equation}
We consider standard P1 or Q1 finite element spaces on a shape-regular dyadic mesh hierarchy with
\[
h_\ell=h_{\ell-1}/2,\qquad h=h_L.
\]
The finest mesh width \(h_L\) is chosen so that the spatial discretization error is below the prescribed accuracy $\epsilon$.  For example, for a second-order elliptic discretization with error \(O(h_L^2)\), achieving accuracy \(\epsilon\) requires \(h_L=O(\sqrt{\epsilon})\).  For a geometrically refined hierarchy with \(h_\ell \simeq 2^{-\ell}h_0\), the number of levels therefore satisfies
\(
    L = O\!\left(\log_2 \frac{1}{h_L}\right)
      = O\!\left(\log_2 \frac{1}{\epsilon}\right).
\)
 Thus the number of levels usually plays an insignificant role in the overall complexity.

\smallskip 

Let \(N_\ell=\dim(V_\ell)\), and let \(\Phi_\ell:=\{\varphi_i^{(\ell)}\}_{i=1}^{N_\ell}\) be the nodal basis of \(V_\ell\).  For convenience, we consider the
standard Lagrange nodal basis, normalized by
\begin{equation}
\varphi_i^{(\ell)}(x_j^{(\ell)})=\delta_{ij}.
\label{eq:nodal-basis-property}
\end{equation} The Galerkin solution expressed in $V_\ell$ in terms of the shape functions $\Phi_\ell$ has coefficient vector \(\bm x_\ell^{\rm raw}\in\R^{N_\ell}\) satisfying
\begin{equation}
A_\ell^{\rm raw}\bm x_\ell^{\rm raw}=\bm b_\ell^{\rm raw},
\qquad
(A_\ell^{\rm raw})_{ij}=a(\varphi_j^{(\ell)},\varphi_i^{(\ell)}),
\qquad
(\bm b_\ell^{\rm raw})_i=F(\varphi_i^{(\ell)}).
\label{eq:discrete-system-raw}
\end{equation}
All matrices are real at this stage. When  used in quantum circuits, they are viewed as complex matrices without changing the notation.

\subsection{Quantum-compatible reduced scaling}
\label{subsec:response-coordinates}

We keep the finite element formulation in nodal coordinates.  No mass-matrix square root is formed.  However, raw nodal \(\ell^2\)-norms differ from physical \(L^2\)-norms by a known grid-volume factor.  On a quasi-uniform mesh, a finite element function $v_\ell(x) \in V_\ell$ with nodal values $\bm v_\ell^{\rm raw}$ has the scaling of  
\begin{equation}
\|v_\ell\|_{L^2(\Omega)}\simeq h_\ell^{d/2}\|\bm v_\ell^{\rm raw}\|_2,
\label{eq:nodal-L2-equivalence}
\end{equation}
with constants independent of \(h_\ell\).  Thus a nodal vector representing an \(O(1)\) physical \(L^2\)-function has raw Euclidean norm \(O(h_\ell^{-d/2})\). Meanwhile, the load vector has a dual scaling:  If \(f\in L^2(\Omega)\), then
\[
(\bm b_\ell^{\rm raw})_i=\int_\Omega f(x)\varphi_i^{(\ell)}(x)\,dx,
\]
has the typical cell-volume scale \(h_\ell^d\), and the dual norm estimate gives
\[
\|\bm b_\ell^{\rm raw}\|_2 \lesssim h_\ell^{d/2}\|f\|_{L^2(\Omega)},
\]
for standard smooth or \(L^2\)-bounded data, up to mesh-independent constants.  

Quantum amplitude encoding in \Cref{q-encode} automatically normalizes such a vector in $\ell^2$ norm. However, these scalar factors \(h_\ell^{d/2}\) are known and should be tracked explicitly, as follows \begin{equation} \bm x_\ell:=h_\ell^{d/2}\bm x_\ell^{\rm raw}, \qquad \bm b_\ell:=h_\ell^{-d/2}\bm b_\ell^{\rm raw}, \qquad A_\ell:=h_\ell^{-d}A_\ell^{\rm raw}. \label{eq:response-coordinate-def} \end{equation} Then we arrive at the linear system of equations that we consider for the quantum implementation, \begin{equation} A_\ell\bm x_\ell=\bm b_\ell. \label{eq:response-discrete-system} \end{equation} The point of \eqref{eq:response-coordinate-def} is only to remove artificial dimension-dependent powers from raw nodal \(\ell^2\)-norms.

This convention preserves the finite element energy pairing, which is commonly used in FEM analysis \cite{BrennerScott2008}.   If we consider two functions in the subspace $V_\ell$
\[
u_\ell(x)= \Phi_\ell(x)^T\bm x_\ell^{\rm raw},
\qquad
v_\ell(x)= \Phi_\ell(x)^T\bm y_\ell^{\rm raw},
\]
and further map them into the reduced coordinates,
\(
\bm x_\ell=h_\ell^{d/2}\bm x_\ell^{\rm raw},
\,
\bm y_\ell=h_\ell^{d/2}\bm y_\ell^{\rm raw},
\)
then
\begin{equation}
a(u_\ell,v_\ell)
=
(\bm y_\ell^{\rm raw})^\dagger A_\ell^{\rm raw}\bm x_\ell^{\rm raw}
=
\bm y_\ell^\dagger A_\ell\bm x_\ell .
\label{eq:energy-preserved-amplitude-coordinates}
\end{equation}
Likewise, the load pairing is preserved:
\[
f(v_\ell)
=
(\bm y_\ell^{\rm raw})^\dagger \bm b_\ell^{\rm raw}
=
\bm y_\ell^\dagger \bm b_\ell .
\]
Hence the rescaled system \eqref{eq:response-discrete-system} represents exactly the same
Galerkin problem, but in coordinates whose Euclidean norm has the physical \(L^2\)-scale.

\smallskip 

In accordance with the reduced scaling of the variables and operators, the multilevel
transfer matrices are scaled consistently as well.  If
\[
P_\ell^{\rm raw}:\mathbb R^{N_{\ell-1}}\to \mathbb R^{N_\ell}
\]
is the usual nodal prolongation between raw nodal vectors, then in reduced
coordinates we set
\begin{equation}
P_\ell
:=
h_\ell^{d/2}P_\ell^{\rm raw}h_{\ell-1}^{-d/2}.
\label{eq:reduced-prolongation}
\end{equation}
Thus, if \(\bm y_{\ell-1}\) is a reduced coordinate vector on level \(\ell-1\), then
\(P_\ell\bm y_{\ell-1}\) is the reduced coordinate vector of the same finite element
function represented on level \(\ell\):
\begin{equation}
h_\ell^{-d/2} \Phi_\ell(x)^\dagger P_\ell\bm y_{\ell-1}
=
h_{\ell-1}^{-d/2} \Phi_{\ell-1}(x)^\dagger \bm y_{\ell-1}.
\label{eq:P-function-meaning}
\end{equation}
When we write \(P_\ell V_{\ell-1}\), we mean this embedded coarse finite element subspace
inside \(V_\ell\), or equivalently the coordinate image \(\range(P_\ell)\subset
\mathbb R^{N_\ell}\).

After this subsection, unless explicitly stated otherwise,
\[
A_\ell,\quad \bm x_\ell,\quad \bm b_\ell,\quad P_\ell
\]
denote the reduced-coordinate objects.  With this convention, the familiar Galerkin
identities keep the same form:
\begin{equation}
A_{\ell-1}=P_\ell^\dagger A_\ell P_\ell.
\label{eq:reduced-galerkin-identities}
\end{equation}

Another useful observation is that in this reduced coordinate and under standard assumptions \cite{BrennerScott2008}, we have, 
\begin{equation}
\|A_\ell\|_2=O(h_\ell^{-2}).
\label{eq:A-response-norm-scale}
\end{equation}

\subsection{Transfer operators and multigrid operations}
\label{subsec:transfer-residual}

In addition to the prolongation $P_\ell$, we choose the restriction as the adjoint transfer \(P_\ell^\dagger\) as standard  in Galerkin multigrid  \cite{Hackbusch1985,BriggsHensonMcCormick2000,TrottenbergOosterleeSchuller2001,Xu1992}. Due to the Galerkin projection on the nested finite element spaces, the load vector and stiffness matrices are related between two successive levels of discretization, 
\begin{equation}
A_{\ell-1}=P_\ell^\dagger A_\ell P_\ell,
\qquad
\bm b_{\ell-1}=P_\ell^\dagger\bm b_\ell.
\label{eq:galerkin-consistency}
\end{equation}
In addition,  we define the Green's function at the same level as
\begin{equation}
B_\ell:=A_\ell^{-1}.
\label{eq:B-def}
\end{equation}
Certainly we have $
\bm x_\ell=B_\ell\bm b_\ell.
$

At the deepest level we frequently abbreviate
\[
A:=A_L,
\qquad
B:=B_L,
\qquad
\bm b:=\bm b_L,
\qquad
\bm x:=\bm x_L,
\]
which are the quantities one would work with in a single-level method.

\subsection{Linear and quadratic quantities of interest}
\label{subsec:qoi}

In applications, PDE simulations are rarely used only to recover all nodal values of the
solution.  More often, one computes scalar quantities of interest (QoIs), such as averages,
fluxes, currents, tractions, compliances, or energies.  We focus on QoIs that are linear or
quadratic in the finite element solution and those that admit compatible level-wise
representations on the nested spaces.

A linear QoI has the form
\begin{equation}
\mathcal L_\ell(\bm x_\ell)=\bm g_\ell^\dagger\bm x_\ell
=
\bm g_\ell^\dagger B_\ell\bm b_\ell, 
\label{eq:linear-qoi}
\end{equation}
where \(\bm g_\ell\) is the reduced-coordinate vector representing the readout functional.
Typical examples include weighted averages,
\[
\int_\Omega \psi(x)u_\ell(x)\dd x,
\]
as well as fluxes, tractions, or currents through an interface \(\Gamma\),
\[
\int_\Gamma a(x)\nabla u_\ell(x)\cdot n(x)\,\dd s. 
\]
For volume averages with bounded weights, the readout vector is usually \(O(1)\) in the
reduced scaling.  Derivative-based readouts are different.  If $  G_\ell \bm x_\ell$
denotes a discrete gradient or flux vector associated with \(u_\ell\), then standard
finite element inverse estimates give
\[
\norm{  G_\ell}=O(h_\ell^{-1})
\]
in the reduced \(L^2\)-compatible scaling.  Thus, when such QoI is
written in the form \eqref{eq:linear-qoi}, the transpose of the derivative acts on
\(\bm g_\ell\), and the readout norm may carry an extra \(h_\ell^{-1}\).

A quadratic QoI has the form
\begin{equation}
\mathcal Q_\ell(\bm x_\ell)=\bm x_\ell^\dagger M_\ell\bm x_\ell
=
\bm b_\ell^\dagger B_\ell^\dagger M_\ell B_\ell\bm b_\ell,
\label{eq:quadratic-qoi}
\end{equation}
for some Hermitian $M_\ell.$
Examples include total power or \(L^2\)-mass,
\[
\int_\Omega \abs{u_\ell(x)}^2\dd x,
\]
and elastic energy,
\[
\int_\Omega \nabla u_\ell(x)^\dagger a(x)\nabla u_\ell(x)\dd x.
\]
For former class of QoIs, \(M_\ell\) is typically \(O(1)\) in reduced
coordinates.  For energy-type observables, however, \(M_\ell\) contains two derivatives.
For instance, it has the schematic form
\[
M_\ell \simeq   G_\ell^\dagger U_{a}  G_\ell,
\]
where \(U_{a}\) is a diagonal block encoding of the coefficient $a(x)$.  Since
\(\norm{  G_\ell}=O(h_\ell^{-1})\), such quadratic observables may have
\[
\norm{M_\ell}=O(h_\ell^{-2}).
\]

Because the spaces are nested and the discretization is Galerkin-compatible, readout
operators on consecutive levels are related by prolongation and restriction.  For linear
QoIs, compatibility means that evaluating the coarse readout on a coarse vector gives the
same value as evaluating the fine readout on its prolongation.  Hence
\begin{equation}
\bm g_{\ell-1}=P_\ell^\dagger\bm g_\ell.
\label{eq:g-compatible}
\end{equation}
For quadratic QoIs, the corresponding compatibility condition is
\begin{equation}
M_{\ell-1}=P_\ell^\dagger M_\ell P_\ell.
\label{eq:M-compatible}
\end{equation}
These relations will be leveraged to design more efficient estimators.

\begin{definition}[Effective readout order]
\label{def:effective-readout-orders}
For a family of linear readout vectors \(\bm g_\ell\), define its effective readout order
\(
\ord_{\rm lin}(\bm g)=\chi_g
\)
if \(\chi_g\ge0\) is the smallest exponent such that, after known scalar input
normalizations are included,
\begin{equation}
\|\bm g_\ell\|_2\
=
O(h_\ell^{-\chi_g}).
\label{eq:g-effective-scale}
\end{equation}
For a family of quadratic observables \(M_\ell\), define its effective readout order
\(
\ord_{\rm quad}(M)=\chi_{_M}
\)
if \(\chi_{_M}\ge0\) is the smallest exponent such that
\begin{equation}
\norm{M_\ell}
=
O(h_\ell^{-\chi_{_M}}).
\label{eq:M-effective-scale}
\end{equation}

\end{definition}

\subsection{Final normalized states }
\label{subsec:normalized-prefactors}

The algebra above is written for unnormalized finite element coefficient vectors. Even after we have eliminated the $d$-dependent scaling with $h_\ell$,  the states $\bm b_\ell$ and $\bm g_\ell$ may still contain other constants. They contribute to the actual values of the QoIs. The quantum circuit prepares normalized states,
\begin{equation}
|b_\ell\rangle:=\frac{\bm b_\ell}{\|\bm b_\ell\|_2},
\qquad
|g_\ell\rangle:=\frac{\bm g_\ell}{\|\bm g_\ell\|_2},
\label{eq:normalized-bg-states}
\end{equation}
whenever the corresponding vector is nonzero.  Therefore the physical linear QoI is recovered from a normalized overlap as
\begin{equation}
\mathcal L_\ell
=
\|\bm g_\ell\|_2\,\|\bm b_\ell\|_2
\langle g_\ell|B_\ell|b_\ell\rangle .
\label{eq:linear-normalized-prefactor}
\end{equation}
Likewise, the physical quadratic QoI is
\begin{equation}
\mathcal Q_\ell
=
\|\bm b_\ell\|_2^2
\langle b_\ell|B_\ell^\dagger M_\ell B_\ell|b_\ell\rangle .
\label{eq:quadratic-normalized-prefactor}
\end{equation}
The scalar factors \(\|\bm b_\ell\|_2\) and \(\|\bm g_\ell\|_2\) are not obtained by sampling the solution state.  They are known from the input preparation procedure or are classically computable from the data model.  In  observable estimation, these factors can be kept outside the expectation or absorbed into the effective observable scale \eqref{eq:M-effective-scale}. Another important point is that while we can assume that $\norm{\bm b_\ell}$ is accessible, the norm of $\bm x_\ell$ usually has to be estimated separately from a QLSA approach \cite{dalzell2024shortcut}.

\subsection{Preparation of the load vector}
\label{subsec:rhs-preparation}

For PDE data, the load vector is not an arbitrary table.  Its entries are finite element
integrals,
\begin{equation}
(\bm b_\ell^{\rm raw})_j=\int_\Omega f(x)\varphi_j^{(\ell)}(x)\dd x.
\label{eq:load-integrals}
\end{equation}
We assume a compressed finite element load model: for each level \(\ell\), a reversible
classical subroutine provides the prefix weights and phases needed by the
Grover--Rudolph/Kaye--Mosca binary-tree construction, so that the normalized state
\(|b_\ell\rangle\) can be prepared with \(O(\log N_\ell)\) prefix-weight calls
\cite{GroverRudolph2002,KayeMosca2001}.  A concrete route for such state preparation in
finite element methods is described in \cite[Appendix~A]{DeimlPeterseim2025} and earlier works \cite{MontanaroPallister2016,OrsucciDunjko2021}.  For
analytic, piecewise polynomial, tensor-product, localized, or otherwise compressed
right-hand sides \(f\), these prefix weights can often be evaluated in time
polylogarithmic in \(N_\ell\).

In this model, the norm \(\norm{\bm b_\ell}_2\) is not an additional quantum-estimation
task: it is the root weight of the same binary tree used to prepare \(|b_\ell\rangle\).
The same convention applies to readout vectors such as \(\bm g_\ell\).  Therefore the
scalar prefactors in \eqref{eq:linear-normalized-prefactor} and
\eqref{eq:quadratic-normalized-prefactor} are treated as classically available whenever
the corresponding state-preparation oracles are available.

\section{Quantum primitives}
\label{sec:quantum-primitives}

This section summarizes the block-encoding tools needed for the quantum algorithms.  The
purpose is not to reprove standard results, but to expose the basic building blocks and
to make clear where normalization factors enter the final measurement cost.  These
normalizations are not cosmetic: in  observable estimation, amplitude estimation
has cost linear in the observable block-encoding normalization, while direct sampling has
variance controlled by its square.

\begin{definition}[Block encoding]
Let \(M\in\C^{m\times n}\).  A unitary \(U_M\) is an
\((\alpha_M,b,\eps_M)\)-block encoding of \(M\) if
\[
\norm{
M-\alpha_M(\bra{0}^{\otimes b}\otimes I_m)U_M
(\ket{0}^{\otimes b}\otimes I_n)
}_2
\le \eps_M .
\]
Here \(b\) is the number of ancilla qubits.  For rectangular matrices, the input and output
registers are understood to be embedded into a common unitary space by padding if needed.
The scalar \(\alpha_M\) is the block-encoding normalization. In the exact case it must
satisfy \(\alpha_M\ge\norm{M}_2\).  Following
\cite[Definition~3.2]{DeimlPeterseim2025}, one may also track the subnormalization
\[
\widetilde\alpha_M:=\alpha_M/\norm{M}_2,
\]
which becomes important when block encodings are composed.
\end{definition}

\begin{lemma}[Sparse block encodings]
\label{lem:sparse-block-encoding}
Let \(M\in\C^{m\times n}\) be \(s_r\)-row-sparse and \(s_c\)-column-sparse.  Suppose
sparse-access oracles return the nonzero positions and entries, and suppose
\[
|M_{ij}|\le M_{\max}.
\]
Then \(M\) admits a block encoding with normalization
\[
\alpha_M=O\!\left(\sqrt{s_rs_c}\,M_{\max}\right)
\]
using \(O(1)\) sparse-oracle calls and polylogarithmic additional gates, up to
entry-precision factors.
\end{lemma}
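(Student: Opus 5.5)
The plan is the standard sparse-access construction of Gilyén--Su--Low--Wiebe \cite{GilyenSuLowWiebe2019}, i.e.\ the Childs--Kothari Hamiltonian-simulation walk adapted to rectangular matrices. After padding, view \(M\) as acting on a common register of dimension \(2^n\ge\max(m,n)\), and rescale the entries as \(M_{ij}/M_{\max}\), so that \(|M_{ij}|/M_{\max}\le 1\). Introduce an ancilla register of \(\lceil\log_2\max(s_r,s_c)\rceil\) qubits together with one extra qubit for the amplitude rotation. The block encoding will be written as \(U_M=U_L^\dagger U_R\), where \(U_L\) and \(U_R\) are state-preparation isometries built from the row and column oracles.

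The key steps, in order, are as follows.

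First, define the row isometry
\[
U_R:\ket{0}\ket{0}\ket{i}\mapsto \frac{1}{\sqrt{s_r}}\sum_{k\in[s_r]}\ket{0}\ket{r(i,k)}\ket{i}.
\]
Here \(r(i,k)\) is the column index of the \(k\)-th nonzero in row \(i\); if the row has fewer than \(s_r\) nonzeros, the unused slots point to dummy indices and are given zero amplitude in the next step. This costs one call to the row-position oracle, plus a uniform superposition prepared by Hadamard gates. Then query the entry oracle into a work register, perform a controlled rotation that places amplitude \(M_{ij}/M_{\max}\) (including its phase) on the flag qubit \(\ket{0}\), and uncompute the entry. This is a constant number of oracle calls, with polylogarithmic arithmetic determined by the bit precision of the entries.

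Second, define the column isometry
\[
U_L:\ket{0}\ket{0}\ket{j}\mapsto \frac{1}{\sqrt{s_c}}\sum_{k\in[s_c]}\ket{0}\ket{c(j,k)}\ket{j},
\]
built analogously from the column-position oracle. After a swap of the two index registers, the index pairs produced by \(U_R\) and \(U_L\) can be matched.

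Third, compute the top-left block directly:
\[
(\bra{0}\otimes\bra{j})\,U_L^\dagger U_R\,(\ket{0}\otimes\ket{i})
=\frac{1}{\sqrt{s_rs_c}}\,\frac{M_{ji}}{M_{\max}}.
\]
This holds because exactly one term survives, namely the one in which \(j\) appears among the nonzero positions of row \(i\) and \(i\) appears among those of column \(j\); sparse consistency of the two oracles is used here. Hence \(\alpha_M=\sqrt{s_rs_c}\,M_{\max}\), and the construction uses \(O(1)\) oracle calls.

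The main obstacle is bookkeeping rather than any conceptual difficulty. One must handle the padding and dummy indices so that no spurious overlaps contribute to the top-left block; if any appear, I would flag dummy entries in an extra qubit. One must also carry finite-precision entry errors through the rotations: \(b\)-bit entries give \(\eps_M=O(\sqrt{s_rs_c}\,M_{\max}2^{-b})\), which is the meaning of ``up to entry-precision factors'' in the statement. Since the paper cites this as a standard primitive, I would present the construction and the overlap computation, and refer to \cite[Lemma 48]{GilyenSuLowWiebe2019} for the gate-level details.
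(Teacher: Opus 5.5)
Your plan is the standard sparse-access construction of Gily\'en--Su--Low--Wiebe \cite{GilyenSuLowWiebe2019} (their Lemma~48). The paper gives no proof of this lemma. It treats it as a cited primitive and only remarks that \(\norm{M}_2\le\sqrt{\norm{M}_1\norm{M}_\infty}\le\sqrt{s_cs_r}\,M_{\max}\), i.e.\ that the claimed normalization is admissible. Your route actually builds the circuit, which that norm bound alone does not.

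However, your third step has the indices transposed, and as written the construction encodes \(M^T\), not \(M\). In your setup the amplitude-loading isometry \(U_R\) acts on the input index \(i\) and enumerates the nonzeros \(r(i,k)\) of row \(i\). Meanwhile \(U_L\) acts on the output index \(j\) and enumerates the nonzeros \(c(j,k')\) of column \(j\). The unique surviving term is the one with \(r(i,k)=j\) and \(c(j,k')=i\), as you say. But its amplitude is the one \(U_R\) loaded, namely \(M_{ij}/M_{\max}\). Hence \((\bra{0}\otimes\bra{j})\,U_L^\dagger\,\mathrm{SWAP}\,U_R\,(\ket{0}\otimes\ket{i})=M_{ij}/(\sqrt{s_rs_c}\,M_{\max})\), so the top-left block is \(M^T/(\sqrt{s_rs_c}\,M_{\max})\). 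For rectangular \(M\) this is visible from the dimensions, since your input register carries a row index in \([m]\) rather than a column index in \([n]\).

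The fix is to interchange the roles of the two oracles. On the input (column) index \(j\), use the column oracle and load \(M_{c(j,k),j}\). On the output (row) index \(i\), use the row oracle with no amplitudes. Alternatively, as in Lemma~48 of Gily\'en--Su--Low--Wiebe, put the conjugated amplitude \(\overline{M_{i,r(i,k)}}\) in the bra-side isometry. Equivalently, run your construction on \(M^T\), whose row and column oracles are those of \(M\) interchanged. Because \(s_rs_c\) is symmetric, the normalization \(\sqrt{s_rs_c}\,M_{\max}\), the \(O(1)\) query count, and your precision estimate \(\eps_M=O(\sqrt{s_rs_c}\,M_{\max}2^{-b})\) are all unchanged.

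Two smaller bookkeeping points:
\begin{itemize}
\item The register into which the position oracle writes \(r(i,k)\) must hold an index of the padded space (plus one bit if dummies are sent out of range). It therefore needs about \(\log_2\max(m,n)\) qubits, not \(\lceil\log_2\max(s_r,s_c)\rceil\). This is harmless, since the lemma says nothing about ancilla count.
\item The dummy slots you flag as the main obstacle cannot create spurious contributions. The in-place position oracles are injective in \(k\), so at most one pair \((k,k')\) survives. Dummies either point out of range and never match, or carry the zero amplitude returned by the entry oracle.
\end{itemize}
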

The normalization in \Cref{lem:sparse-block-encoding} is consistent with the spectral
norm bound
\[
\norm{M}_2
\le
\sqrt{\norm{M}_1\norm{M}_\infty}
\le
\sqrt{s_c s_r}\,M_{\max}.
\]

\begin{lemma}[Block-encoding algebra]
\label{lem:block-algebra}
Let \(U_A\) and \(U_B\) block encode compatible matrices \(A\) and \(B\) with
normalizations \(\alpha_A\) and \(\alpha_B\).  Then:
\begin{enumerate}[label=(\roman*),leftmargin=2em]
\item \(A^\dagger\) is block encoded with normalization \(\alpha_A\).
\item \(AB\) is block encoded with normalization \(\alpha_A\alpha_B\).
\item \(\mu_AA+\mu_BB\) is block encoded by LCU with normalization
\[
|\mu_A|\alpha_A+|\mu_B|\alpha_B.
\]
\item The block diagonal matrix \(A\oplus B\) can be block encoded with normalization
\[
\max\{\alpha_A,\alpha_B\}.
\]
\end{enumerate}
\end{lemma}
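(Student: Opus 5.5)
We prove the four items one at a time, working directly from the definition of an exact block encoding: $A=\alpha_A(\bra{0}^{\otimes a}\otimes I)U_A(\ket{0}^{\otimes a}\otimes I)$, and likewise for $B$ with $b$ ancillas. The approximate case follows by tracking the errors through the triangle inequality. The main technical device throughout is a careful ordering of the ancilla registers so that the projected blocks factor.

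\textbf{(i) Adjoint.} Take $U_A^\dagger$ as the block encoding. Conjugating the defining identity gives
\[
(\bra{0}\otimes I)U_A^\dagger(\ket{0}\otimes I)=A^\dagger/\alpha_A .
\]
So the normalization $\alpha_A$ is unchanged, and an error bound $\eps_A$ carries over because $\norm{X^\dagger}_2=\norm{X}_2$.

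\textbf{(ii) Product.} Use the disjoint ancilla registers $a$ and $b$, and consider
\[
W=(I_b\otimes U_A)(I_a\otimes U_B),
\]
with each factor acting trivially on the other's ancilla. Projecting both registers onto $\ket{0}$ factors the product as follows. Since $U_B$ does not touch the $a$-ancilla, we have
\[
(\bra{0}_a\otimes I)(I_a\otimes U_B)(\ket{0}_a\otimes I)=U_B .
\]
Inserting the decomposition $I=\ket{0}\bra{0}_b+(I-\ket{0}\bra{0}_b)$ between the two factors, the cross term vanishes after the final $\bra{0}_b$ projection, because $U_A$ acts trivially on $b$. The remaining term is $(A/\alpha_A)(B/\alpha_B)$. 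This gives normalization $\alpha_A\alpha_B$ and error at most $\alpha_A\eps_B+\alpha_B\eps_A+\eps_A\eps_B$.

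\textbf{(iii) Linear combination.} This is the standard LCU construction. Absorb the phases of $\mu_A,\mu_B$ into controlled phase gates, and prepare a one-qubit state
\[
V\ket{0}=\frac{\sqrt{|\mu_A|\alpha_A}\ket{0}+\sqrt{|\mu_B|\alpha_B}\ket{1}}{\sqrt{|\mu_A|\alpha_A+|\mu_B|\alpha_B}} .
\]
Then apply the select operator
\[
\ket{0}\bra{0}\otimes U_A+\ket{1}\bra{1}\otimes U_B ,
\]
padding the ancilla counts to be equal, and finish with $V^\dagger$. Computing the $\ket{0}$ block gives
\[
\frac{\mu_A A+\mu_B B}{|\mu_A|\alpha_A+|\mu_B|\alpha_B} .
\]

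\textbf{(iv) Direct sum.} Assume without loss of generality that $\alpha_A\ge\alpha_B$. First rescale the encoding of $B$ to normalization $\alpha_A$: apply (iii) to $B$ and $0$, or equivalently use one extra ancilla with a rotation of amplitude $\alpha_B/\alpha_A$. Then take
\[
\ket{0}\bra{0}\otimes U_A+\ket{1}\bra{1}\otimes U_B' ,
\]
with the flag qubit serving as the direct-sum index rather than as an ancilla. Its projected block is $(A\oplus B)/\alpha_A$.

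The only delicate point is the bookkeeping of ancilla registers and padding for rectangular or unequal dimensions in (ii) and (iv), which is routine. Everything else is a direct computation.
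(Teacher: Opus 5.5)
Your proposal is correct and follows the paper's route: the paper does not prove the lemma itself but defers to Gily\'en--Su--Low--Wiebe (Lemma~53 for the product $(I_b\otimes U_A)(I_a\otimes U_B)$, Lemma~52 for LCU) and to Deiml--Peterseim, and those are exactly the circuits you write out, with your prepare amplitudes weighted by $|\mu_A|\alpha_A$ and $|\mu_B|\alpha_B$ being the standard way to handle unequal normalizations. Your explicit adjoint argument, the rescaling step for the direct sum, and the error bound $\alpha_A\eps_B+\alpha_B\eps_A+\eps_A\eps_B$ just make self-contained what the paper leaves to those citations.
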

The product and LCU constructions appear as Lemmas~53 and~52 of Gily\'en--Su--Low--Wiebe \cite{GilyenSuLowWiebe2019}.  Deiml and Peterseim give a form adapted to finite element block encodings  \cite[Proposition~3.3]{DeimlPeterseim2025}.

The primitives in (\emph{i}) and (\emph{ii}) can be used to use the gradient matrix to build a block-encoding of the stiffness matrix. The normalization factors $\alpha$ will directly impact the cost of estimating QoIs.

\begin{lemma}[Inverse and pseudoinverse by QSVT]
\label{lem:inverse-block-encoding}
Let \(U_A\) be an \((\alpha_A,a,\eps_A)\)-block encoding of \(A\).  Assume the nonzero
singular values of \(A\) lie in
\[
[\sigma_{\min},\sigma_{\max}],
\qquad
\kappa_A:=\sigma_{\max}/\sigma_{\min}.
\]
Then QSVT gives a block encoding of \(A^+\) with normalization
\[
O(\sigma_{\min}^{-1})
\]
and queries to $U_A$
\[
O\!\left(
\frac{\alpha_A}{\sigma_{\min}}\log(1/\eps)
\right),
\]
up to standard logarithmic factors and polynomial-approximation conventions.  In the
common case \(\alpha_A=\Theta(\sigma_{\max})\), this becomes
\(
O\!\left(\kappa_A\log(1/\eps)\right).
\)
\end{lemma}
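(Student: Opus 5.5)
The plan is to reduce the statement to the standard QSVT inversion theorem \cite{GilyenSuLowWiebe2019}, with most of the work going into tracking normalizations. First I will rescale the input: $U_A$ is also a block encoding of $\widetilde A:=A/\alpha_A$ with normalization one. The nonzero singular values of $\widetilde A$ then lie in $[\delta,1]$, where $\delta:=\sigma_{\min}/\alpha_A\le 1/\kappa_A$, and they share singular vectors with $A$. So $A^+=\alpha_A^{-1}\widetilde A^{+}$, and it suffices to block encode $\widetilde A^+$ with normalization $O(\delta^{-1})$. Multiplying by $\alpha_A^{-1}$ turns this into normalization $O(\delta^{-1}\alpha_A^{-1})=O(\sigma_{\min}^{-1})$, which is the claimed bound.

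The core step is polynomial approximation. I will take an odd polynomial $p$ of degree $d=O(\delta^{-1}\log(1/\eps))$ with two properties:
\begin{itemize}
\item $|p(x)-\tfrac{\delta}{2x}|\le\eps'$ on $[\delta,1]$;
\item $|p(x)|\le 1$ on $[-1,1]$.
\end{itemize}
The standard choice is $\tfrac{1-(1-x^2)^b}{x}$ with $b=O(\delta^{-2}\log(1/\eps))$, truncated in the Chebyshev basis and scaled by a constant \cite[Lemma~40]{GilyenSuLowWiebe2019}. Since $p$ is odd, QSVT applied to $U_A$ with the corresponding phase sequence gives a block encoding of $p^{(SV)}(\widetilde A)$. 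For an odd function this singular value transform acts on the nonzero singular values and maps the input and output singular vectors in reverse order, so it is approximately $\tfrac{\delta}{2}\widetilde A^{+}$. This costs $d$ queries to $U_A$ and $U_A^\dagger$ together with one extra ancilla.

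Two points need care, and I expect the first to be the main obstacle. The first is the pseudoinverse statement: on the kernel of $A$ (and on the padded part of the rectangular embedding), the transform must be close to $0$. This holds because $p(0)=0$ for odd $p$, but the approximation bound is only controlled on $[\delta,1]$, not on the gap $(0,\delta)$. Here I will invoke the assumption that the nonzero singular values lie in $[\sigma_{\min},\sigma_{\max}]$, so the gap contains no spectrum. The second is error propagation. The input error $\eps_A$ enters the output through the Lipschitz bound on QSVT, which grows like $O(d\sqrt{\eps_A/\alpha_A})$ or $O(d\,\eps_A/\alpha_A)$ depending on the convention. Choosing $\eps_A$ and $\eps'$ polynomially small makes the total error, after multiplying by $2/(\delta\alpha_A)$, at most $\eps$. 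These requirements only contribute the logarithmic factors that the statement hides.

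Collecting the pieces, the query count is $d=O\!\left(\frac{\alpha_A}{\sigma_{\min}}\log(1/\eps)\right)$. In the common case $\alpha_A=\Theta(\sigma_{\max})$, this becomes $O(\kappa_A\log(1/\eps))$. The normalization is $2/(\delta\alpha_A)=2/\sigma_{\min}=O(\sigma_{\min}^{-1})$. Both bounds hold up to the polynomial-approximation conventions mentioned in the statement.
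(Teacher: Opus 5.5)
Your plan is the argument the paper relies on. The paper's proof is only a pointer to the \(1/x\) polynomial approximation and the pseudoinverse-by-QSVT theorem of Gily\'en--Su--Low--Wiebe, and your rescaling to \(\widetilde A=A/\alpha_A\), the normalization \(2/(\delta\alpha_A)=2/\sigma_{\min}\), the degree count, and the treatment of \(\eps_A\) via the QSVT robustness bound correctly unpack that citation.

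\textbf{One step is wrong as written.} An odd singular value transform does not swap input and output singular vectors. If \(\widetilde A=\sum_i\sigma_i\bm w_i\bm v_i^\dagger\), then \(p^{(SV)}(\widetilde A)=\sum_i p(\sigma_i)\bm w_i\bm v_i^\dagger\) keeps the orientation of \(\widetilde A\). So with \(p(x)\approx\delta/(2x)\) you obtain \(\tfrac{\delta}{2}(\widetilde A^{+})^\dagger\), not \(\tfrac{\delta}{2}\widetilde A^{+}\). For rectangular \(A\in\C^{m\times n}\) this does not even have the shape of \(A^{+}\in\C^{n\times m}\). The repair is to apply the same phase sequence to \(U_A^\dagger\), which block encodes \(\widetilde A^\dagger\) with the same normalization by \Cref{lem:block-algebra}(i). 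Then \(p^{(SV)}(\widetilde A^\dagger)\approx\tfrac{\delta}{2}\widetilde A^{+}\) at the same query cost. For the Hermitian matrices the paper actually inverts (\(\Sigma_\ell\)), the two coincide.

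\textbf{A smaller inaccuracy is in how you describe \(p\).} The function \(\tfrac{1-(1-x^2)^b}{x}\) with \(b=\Theta(\delta^{-2}\log(1/\eps))\) reaches height \(\Theta(\sqrt b)=\Theta(\delta^{-1}\sqrt{\log(1/\eps)})\) near \(x\approx b^{-1/2}\). After multiplying by \(\delta/2\) it is therefore not bounded by \(1\) on \([-1,1]\). Achieving \(|p|\le1\) with target \(\delta/(2x)\) needs an extra ingredient that suppresses this bump, for instance multiplying by a polynomial approximation of a rectangle function vanishing near \(0\). Otherwise you accept a \(\sqrt{\log(1/\eps)}\) loss in the normalization. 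Since you take both properties of \(p\) from the cited lemma rather than from this sketch, the proof itself is unaffected.
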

This follows from polynomial approximation of \(1/x\) and singular value transformation; see \cite[Lemma~40 and Theorem~41]{GilyenSuLowWiebe2019}.

This lemma is particularly useful for solving linear system of equations $A \bm x = \bm b.  $ On the other hand, once the solution $\bm x$ is obtained from $B=A^{-1}$ applied to $\bm b$, further steps are needed to estimate quantities of interest \cite{Rall2020}. 

\begin{lemma}[Estimating linear and quadratic forms]
\label{lem:rall-estimation}
Let \(B\) have a block encoding with normalization \(\alpha_B\), and let \(M\) have a
Hermitian block encoding with normalization \(\alpha_M\).  Suppose \(\ket{b}\) and, for
linear forms, \(\ket{g}\), can be prepared.
\begin{enumerate}[label=(\roman*),leftmargin=2em]
\item The normalized overlap \(\bra{g}B\ket{b}\) can be estimated with
amplitude-estimation cost proportional to
\[
\alpha_B/\eps.
\]
The final value \(\bm g^\dagger B\bm b\) is then obtained by multiplying by
\(
\norm{\bm g}_2\norm{\bm b}_2.
\)
\item The normalized quadratic form \(\bra{b}B^\dagger M B\ket{b}\) can be estimated with
amplitude-estimation cost proportional to
\[
\alpha_B^2\alpha_M/\eps.
\]
The final value \(\bm b^\dagger B^\dagger M B\bm b\) is then obtained by multiplying
by
\(
\norm{\bm b}_2^2.
\)
\end{enumerate}
\end{lemma}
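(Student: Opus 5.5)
The plan is to reduce both parts of \Cref{lem:rall-estimation} to amplitude estimation on a single unitary whose $\ket{0}$-ancilla block carries the target quantity, rescaled by the appropriate normalization.

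\textbf{Linear form.} Let $U_B$ be the block encoding of $B$ with normalization $\alpha_B$, so that $(\bra{0}\otimes I)U_B(\ket{0}\otimes I)=B/\alpha_B$. Let $U_b$ and $U_g$ prepare $\ket{b}$ and $\ket{g}$. The circuit $U_g^\dagger U_B U_b$ then satisfies
\[
\bra{0}\bra{0}U_g^\dagger U_B U_b\ket{0}\ket{0}=\bra{g}B\ket{b}/\alpha_B .
\]
I would estimate this complex amplitude with a Hadamard test in which the controlled circuit is the whole composite. Real and imaginary parts are handled separately, with a phase gate on the control for the latter. Each part becomes the success probability $(1\pm \mathrm{Re}\,z)/2$ of a single qubit, so amplitude estimation returns $z$ to additive error $\delta$ using $O(1/\delta)$ calls to the composite circuit. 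To reach error $\eps$ in $\bra{g}B\ket{b}=\alpha_B z$ we need $\delta=\eps/\alpha_B$, which gives cost $O(\alpha_B/\eps)$. Multiplying the estimate by the classically known $\norm{\bm g}_2\norm{\bm b}_2$ recovers $\bm g^\dagger B\bm b$, as in \eqref{eq:linear-normalized-prefactor}.

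\textbf{Quadratic form.} By \Cref{lem:block-algebra}(i)–(ii), $B^\dagger M B$ has a block encoding with normalization $\alpha_B^2\alpha_M$. Concretely it is the product $U_B^\dagger U_M U_B$ with fresh ancillas for each factor. Sandwiching it between $U_b$ and $U_b^\dagger$ gives a real amplitude, since $M$ is Hermitian, equal to $\bra{b}B^\dagger M B\ket{b}/(\alpha_B^2\alpha_M)$. The same Hadamard-test-plus-amplitude-estimation argument then gives cost $O(\alpha_B^2\alpha_M/\eps)$, and multiplying by $\norm{\bm b}_2^2$ recovers the value in \eqref{eq:quadratic-normalized-prefactor}.

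An alternative for this part avoids the product encoding: prepare $U_B\ket{0}\ket{b}$, whose good branch is $B\ket{b}/\alpha_B$, and measure the block-encoded $M/\alpha_M$ on that branch only, controlled on the flagged ancilla being $\ket{0}$. This yields the same scaling.

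\textbf{Main obstacle.} The only delicate step is error accounting. The block encodings carry errors $\eps_B$ and $\eps_M$, and the QSVT inverse has polynomial-approximation error. Each of these perturbs the amplitude by at most (encoding error)/(normalization), multiplied by the product of the other normalizations. I would therefore choose $\eps_B\lesssim \eps/(\alpha_B\alpha_M)$ and $\eps_M\lesssim\eps/\alpha_B^2$ so that these bias terms stay below $\eps/2$; this costs only logarithmic factors through \Cref{lem:inverse-block-encoding}. A median-of-means repetition boosts the success probability, again at logarithmic cost. Both are absorbed into the word "proportional" in the statement.
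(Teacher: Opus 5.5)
Your proposal is correct and follows the standard route. The paper gives no proof of this lemma and imports it from Rall (2020) as a black-box primitive; your argument is the one underlying that citation, with the normalizations $\alpha_B$ and $\alpha_B^2\alpha_M$ entering linearly: the product/adjoint encoding of $B^\dagger MB$ via \Cref{lem:block-algebra} with fresh ancillas, followed by a Hadamard test and amplitude estimation.

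Your error budget $\eps_B\lesssim\eps/(\alpha_B\alpha_M)$, $\eps_M\lesssim\eps/\alpha_B^2$ is also the right one. The only cosmetic slip is that success-probability boosting uses a plain median of amplitude-estimation runs rather than median-of-means.
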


\section{The multilevel principle for quantum observable estimation}
\label{sec:mlmc}

Recall that multigrid methods are often constructed from a hierarchy of nested subspaces
\[
V_0\subset V_1\subset\cdots\subset V_L,
\qquad
h_\ell=h_{\ell-1}/2,
\qquad
h=h_L.
\]
Such a hierarchy naturally gives a sequence of level-dependent outputs \(Q_\ell\).  The
fundamental multilevel identity is the telescoping sum
\begin{equation}
Q_L=Q_0+\sum_{\ell=1}^L \Delta_\ell,
\qquad
\Delta_\ell:=Q_\ell-Q_{\ell-1}.
\label{eq:ml-telescoping}
\end{equation}
Classical MLMC combines \eqref{eq:ml-telescoping} with a coupling of \(Q_\ell\) and
\(Q_{\ell-1}\), so that \(\Delta_\ell\) has much smaller variance than \(Q_\ell\) itself
\cite{Giles2008MLMC,Giles2015MLMC}.

The same principle is useful in quantum observable estimation, but the relevant scale is
not only a classical variance.  Quantum readout has two standard regimes.  With direct
projective measurement, a level-\(\ell\) observable with effective scale \(\alpha_\ell\)
gives a bounded random variable whose variance is at most \(O(\alpha_\ell^2)\), which will require \Cref{lem:rall-estimation}.   Thus
estimating a level contribution to RMS accuracy \(\epsilon_\ell\) by direct sampling costs
\[
\widetilde O\!\left(
\frac{T_\ell\alpha_\ell^2}{\epsilon_\ell^2}
\right),
\]
where \(T_\ell\) is the cost of one level-\(\ell\) circuit call.  Amplitude
estimation \cite{Rall2020} improves the precision scaling to 
\[
\widetilde O\!\left(
\frac{T_\ell\alpha_\ell}{\epsilon_\ell}. 
\right)
\]
  In both regimes,  the quantum analogue of MLMC variance
reduction is to make the level operator itself smaller before measurement.

Estimating \(Q_\ell\) and \(Q_{\ell-1}\) separately and subtracting the two estimates
classically does not lower the quantum measurement cost, because each circuit still sees
the full single-level observable scale.  The cancellation must be represented inside the
operator or state entering the quantum circuit.  The rest of the paper implements this
idea for elliptic finite element quantities of interest.

We use the following allocation estimate repeatedly.  For notational convenience, set
\(\Delta_0:=Q_0\).

\begin{lemma}[MLMC allocation for quantum observable estimation]
\label{lem:mlmc-allocation}
Let
\[
Q_L=\sum_{\ell=0}^L\Delta_\ell
\]
be a multilevel decomposition.  Let \(T_\ell\) denote the cost of one level-\(\ell\)
circuit call, including state preparation, block encodings, and any levelwise
normalization subroutines used in the estimator.  Let \(\alpha_\ell\) denote the effective
block-encoding scale of the level-\(\ell\) observable after known scalar prefactors have
been included.

\begin{enumerate}[label=(\roman*),leftmargin=2em]
\item Suppose amplitude estimation estimates \(\Delta_\ell\) to additive accuracy
\(\epsilon_\ell\), with
\[
\sum_{\ell=0}^L\epsilon_\ell\le \epsilon,
\]
and with level cost
\[
\mathcal C_\ell^{\rm AE}(\epsilon_\ell)
=
\widetilde O\!\left(\frac{c_\ell}{\epsilon_\ell}\right).
\]
Then the optimized total cost satisfies
\begin{equation}
\mathcal C_{\rm ML}^{\rm AE}
=
\widetilde O\!\left(
\frac1\epsilon
\left[\sum_{\ell=0}^L c_\ell^{1/2}\right]^2
\right).
\label{eq:mlmc-ae-allocation}
\end{equation}
In particular, for a block-encoding observable estimator,
\(
c_\ell=T_\ell\alpha_\ell,
\)
and therefore
\begin{equation}
\mathcal C_{\rm ML}^{\rm AE}
=
\widetilde O\!\left(
\frac1\epsilon
\left[\sum_{\ell=0}^L (T_\ell\alpha_\ell)^{1/2}\right]^2
\right).
\label{eq:mlmc-ae-alpha}
\end{equation}

\item Suppose direct sampling uses independent level estimators.  Let \(v_\ell\) be the
variance of one level-\(\ell\) sample, and let \(T_\ell\) be the cost of producing one such
sample.  Then the optimized cost to make the total sampling variance at most
\(\epsilon^2\) is
\begin{equation}
\mathcal C_{\rm ML}^{\rm samp}
=
\widetilde O\!\left(
\frac1{\epsilon^2}
\left[\sum_{\ell=0}^L (T_\ell v_\ell)^{1/2}\right]^2
\right).
\label{eq:mlmc-samp-allocation}
\end{equation}
If one level-\(\ell\) sample is bounded in magnitude by \(O(\alpha_\ell)\), then
\[
v_\ell\le C\alpha_\ell^2
\]
for a mesh-independent constant \(C\), and hence the worst-case sampling cost is
\begin{equation}
\mathcal C_{\rm ML}^{\rm samp}
=
\widetilde O\!\left(
\frac1{\epsilon^2}
\left[\sum_{\ell=0}^L T_\ell^{1/2}\alpha_\ell\right]^2
\right).
\label{eq:mlmc-samp-alpha}
\end{equation}
\end{enumerate}
\end{lemma}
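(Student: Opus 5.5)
The statement is a pair of constrained optimization problems: choose per-level accuracies $\epsilon_\ell$ (amplitude estimation) or per-level sample counts $n_\ell$ (direct sampling) to minimize total cost subject to a global accuracy budget. I will solve both with the Lagrange multiplier / Cauchy--Schwarz argument familiar from MLMC. The hidden logarithmic factors are carried through unchanged. This is legitimate because the failure probability of each of the $L+1=O(\log(1/\epsilon))$ amplitude-estimation runs can be driven down by median amplification at logarithmic cost, and a union bound then gives a global guarantee.

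\emph{Part (i).} The estimators satisfy $|\widehat\Delta_\ell-\Delta_\ell|\le\epsilon_\ell$ with high probability. The triangle inequality then gives $|\sum_\ell\widehat\Delta_\ell-Q_L|\le\sum_\ell\epsilon_\ell\le\epsilon$, so it suffices to minimize $\sum_\ell c_\ell/\epsilon_\ell$ subject to $\sum_\ell\epsilon_\ell=\epsilon$. I take
\[
\epsilon_\ell=\epsilon\,c_\ell^{1/2}\Big/\sum_k c_k^{1/2}.
\]
Substituting gives total cost $\epsilon^{-1}\bigl(\sum_\ell c_\ell^{1/2}\bigr)^2$. For optimality, Cauchy--Schwarz applied to $\sum_\ell c_\ell^{1/2}=\sum_\ell (c_\ell/\epsilon_\ell)^{1/2}\epsilon_\ell^{1/2}$ shows that no allocation does better. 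For the block-encoding specialization, one AE run of precision $\epsilon_\ell$ on an observable of normalization $\alpha_\ell$ uses $O(\alpha_\ell/\epsilon_\ell)$ coherent calls, by \Cref{lem:rall-estimation} after the prefactors are absorbed into $\alpha_\ell$. Each call costs $T_\ell$, so $c_\ell=T_\ell\alpha_\ell$, which yields \eqref{eq:mlmc-ae-alpha}.

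\emph{Part (ii).} With $n_\ell$ independent samples, the variance of the combined estimator is $\sum_\ell v_\ell/n_\ell$ and its cost is $\sum_\ell n_\ell T_\ell$. Minimizing cost subject to variance at most $\epsilon^2$ gives the standard choice
\[
n_\ell=\epsilon^{-2}\sqrt{v_\ell/T_\ell}\,\sum_k\sqrt{T_kv_k}.
\]
Substituting yields \eqref{eq:mlmc-samp-allocation}, and Cauchy--Schwarz again shows this is optimal. Rounding to $\lceil n_\ell\rceil$ adds at most $\sum_\ell T_\ell$, which is lower order and is absorbed in $\widetilde O$ as long as we assume $\epsilon\lesssim\sum_\ell\sqrt{T_\ell v_\ell}/\sqrt{\sum_\ell T_\ell}$, or simply note this explicitly. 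Finally, a random variable bounded by $C'\alpha_\ell$ has variance at most $C'^2\alpha_\ell^2$. Plugging $v_\ell\le C\alpha_\ell^2$ into \eqref{eq:mlmc-samp-allocation} gives $\sqrt{T_\ell v_\ell}\le C^{1/2}T_\ell^{1/2}\alpha_\ell$, which is \eqref{eq:mlmc-samp-alpha}.

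The main obstacle is not the optimization itself but justifying the form of the per-level costs. In (i), the cost $\widetilde O(c_\ell/\epsilon_\ell)$ must hold uniformly in $\ell$ with mesh-independent constants, and the confidence boosting must cost only an additive $O(\log L)=O(\log\log(1/\epsilon))$ factor. In (ii), a single measurement outcome must be converted into a bounded unbiased sample of $\Delta_\ell$ of magnitude $O(\alpha_\ell)$, for example a Hadamard-test outcome $\pm1$ rescaled by $\alpha_\ell$ together with the known prefactors. I will state these as the hypotheses the lemma takes as given and keep the proof itself to the allocation argument above.
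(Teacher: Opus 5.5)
Your proposal is correct and follows essentially the same route as the paper's proof: the same Lagrange-multiplier allocations $\epsilon_\ell\propto c_\ell^{1/2}$ and $n_\ell\propto\sqrt{v_\ell/T_\ell}$, with your Cauchy--Schwarz lower bound, union-bound confidence boosting, and rounding remark being sound additions that the paper leaves implicit. The only slip is cosmetic: median amplification contributes a multiplicative $O(\log L)$ factor, not an additive one, and this factor is still absorbed in $\widetilde O$.
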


\begin{proof}
For amplitude estimation, we minimize
\[
\sum_{\ell=0}^L \frac{c_\ell}{\epsilon_\ell}, \quad \text{subject to} \; \sum_{\ell=0}^L\epsilon_\ell\le \epsilon. 
\]

The Lagrange multiplier equations give
\[
\epsilon_\ell
=
\epsilon
\frac{c_\ell^{1/2}}{\sum_{j=0}^L c_j^{1/2}}, \Longrightarrow \sum_{\ell=0}^L\frac{c_\ell}{\epsilon_\ell}
=
\frac1\epsilon
\left(\sum_{\ell=0}^L c_\ell^{1/2}\right)^2.
\]

For block-encoding observable estimation, the amplitude estimation approach gives
\(c_\ell=T_\ell\alpha_\ell\), which yields \eqref{eq:mlmc-ae-alpha}.

For direct sampling, let \(N_\ell\) be the number of independent level-\(\ell\) samples.
The total sampling variance and total cost are
\[
\sum_{\ell=0}^L \frac{v_\ell}{N_\ell},
\qquad
\sum_{\ell=0}^L N_\ell T_\ell.
\]
Minimizing the cost subject to
\[
\sum_{\ell=0}^L \frac{v_\ell}{N_\ell}\le \epsilon^2
\]
gives
\[
N_\ell\propto \sqrt{\frac{v_\ell}{T_\ell}} \Rightarrow \frac1{\epsilon^2}
\left(\sum_{\ell=0}^L (T_\ell v_\ell)^{1/2}\right)^2.
\]

If the sample is bounded by \(O(\alpha_\ell)\), then \(v_\ell\le C\alpha_\ell^2\), giving
\eqref{eq:mlmc-samp-alpha}.
\end{proof}

For piecewise linear finite elements and smooth solutions, the natural \(L^2\)-response
correction scale is \(O(h_\ell^2)\).  Therefore, throughout the main results, we focus on
the case where the level construction exposes a two-power factor \(h_\ell^2\) in the
correction.  This keeps the presentation centered on the standard \(P_1/Q_1\) finite
element setting.  Higher-order elements or rougher data can be treated by replacing the
exponent \(2\) by the corresponding order.

\section{Two-level correction by Schur Complement }
\label{sec:two-level-estimators}

This section gives an algebraic realization of the coupled level difference for \Cref{eq:ml-telescoping}. It is based on a Green's function correction, which we will express in terms of a Schur and Ritz complements.

\subsection{Level differences through Green's functions}
\label{subsec:D-level-differences}

Recall from \Cref{subsec:transfer-residual} that $B_\ell= A_\ell^{-1}$ is the Green's function.  We now define a lifted Green's function from level $\ell-1$ and consider its discrepancy with $B_\ell,$
\begin{equation}\label{Cl-Dl}
 C_\ell =P_\ell B_{\ell-1}P_\ell^\dagger,
\qquad
D_\ell=B_\ell- C_\ell ,
\end{equation}

\begin{lemma}[Level differences as one overlap or one expectation]
\label{lem:D-level-difference}
Assume \eqref{eq:galerkin-consistency}, \eqref{eq:g-compatible}, and
\eqref{eq:M-compatible}.
\begin{enumerate}[label=(\alph*),leftmargin=2em]
\item The linear level difference satisfies
\begin{equation}
\Delta_\ell^{\rm lin}
:=\bm g_\ell^\dagger B_\ell\bm b_\ell
-\bm g_{\ell-1}^\dagger B_{\ell-1}\bm b_{\ell-1}
=
\bm g_\ell^\dagger D_\ell\bm b_\ell.
\label{eq:D-linear-difference}
\end{equation}
Thus the level increment is a single overlap involving the residual-correction response.

\item The quadratic level difference satisfies
\begin{align}
\Delta_\ell^{\rm quad}
&:=
\bm b_\ell^\dagger B_\ell^\dagger M_\ell B_\ell\bm b_\ell
-
\bm b_{\ell-1}^\dagger B_{\ell-1}^\dagger M_{\ell-1}B_{\ell-1}\bm b_{\ell-1}
\notag\\
&=
\bm b_\ell^\dagger
\left(
D_\ell^\dagger M_\ell D_\ell
+D_\ell^\dagger M_\ell C_\ell 
+ C_\ell ^\dagger M_\ell D_\ell
\right)
\bm b_\ell.
\label{eq:D-quadratic-difference}
\end{align}
\end{enumerate}
\end{lemma}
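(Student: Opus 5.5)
The plan is to rewrite every coarse-level quantity as a fine-level quantity sandwiched between prolongations, using only the compatibility relations \eqref{eq:galerkin-consistency}, \eqref{eq:g-compatible} and \eqref{eq:M-compatible}. Once that is done, the differences collapse into expressions in $D_\ell$ and $C_\ell$ from \eqref{Cl-Dl}. No analytic estimates are needed; the whole argument is algebraic.

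For part (a), I substitute $\bm g_{\ell-1}=P_\ell^\dagger\bm g_\ell$ and $\bm b_{\ell-1}=P_\ell^\dagger\bm b_\ell$ into the coarse term. This gives
\[
\bm g_{\ell-1}^\dagger B_{\ell-1}\bm b_{\ell-1}
=\bm g_\ell^\dagger P_\ell B_{\ell-1}P_\ell^\dagger\bm b_\ell
=\bm g_\ell^\dagger C_\ell\bm b_\ell .
\]
Subtracting this from $\bm g_\ell^\dagger B_\ell\bm b_\ell$ yields $\bm g_\ell^\dagger(B_\ell-C_\ell)\bm b_\ell=\bm g_\ell^\dagger D_\ell\bm b_\ell$, which is \eqref{eq:D-linear-difference}. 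The only point to note is that $B_{\ell-1}=(P_\ell^\dagger A_\ell P_\ell)^{-1}$ exists. This holds because $P_\ell$ is injective (nested spaces) and $A_\ell$ is Hermitian positive definite.

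For part (b), I first show that the coarse solution, lifted to level $\ell$, is $C_\ell\bm b_\ell$. Indeed, $P_\ell B_{\ell-1}\bm b_{\ell-1}=P_\ell B_{\ell-1}P_\ell^\dagger\bm b_\ell=C_\ell\bm b_\ell$. Using $M_{\ell-1}=P_\ell^\dagger M_\ell P_\ell$, the coarse quadratic term becomes
\[
\bm b_{\ell-1}^\dagger B_{\ell-1}^\dagger M_{\ell-1}B_{\ell-1}\bm b_{\ell-1}
=(P_\ell B_{\ell-1}P_\ell^\dagger\bm b_\ell)^\dagger M_\ell (P_\ell B_{\ell-1}P_\ell^\dagger\bm b_\ell)
=\bm b_\ell^\dagger C_\ell^\dagger M_\ell C_\ell\bm b_\ell .
\]
Next I write $B_\ell=D_\ell+C_\ell$ in the fine term and expand:
\[
B_\ell^\dagger M_\ell B_\ell=D_\ell^\dagger M_\ell D_\ell+D_\ell^\dagger M_\ell C_\ell+C_\ell^\dagger M_\ell D_\ell+C_\ell^\dagger M_\ell C_\ell .
\]
Subtracting the coarse term cancels $C_\ell^\dagger M_\ell C_\ell$, which leaves exactly \eqref{eq:D-quadratic-difference}.

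I do not expect a real obstacle. The one step deserving care is the identification of the lifted coarse solution with $C_\ell\bm b_\ell$. This relies on the restriction being the adjoint $P_\ell^\dagger$ in the load relation and on the adjoint pairing in $(P_\ell B_{\ell-1}\bm b_{\ell-1})^\dagger$. I also need $B_{\ell-1}^\dagger=B_{\ell-1}$ (by Hermiticity of $A_{\ell-1}$), so that the adjoint of $P_\ell B_{\ell-1}P_\ell^\dagger$ is again $C_\ell$.
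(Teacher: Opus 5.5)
Your proof is correct and is essentially the paper's argument: you use the compatibility relations to rewrite the coarse terms as $\bm g_\ell^\dagger C_\ell\bm b_\ell$ and $\bm b_\ell^\dagger C_\ell^\dagger M_\ell C_\ell\bm b_\ell$, then subtract, expanding $B_\ell=C_\ell+D_\ell$ in the quadratic case. One small remark: you do not actually need Hermiticity of $B_{\ell-1}$, because $(C_\ell\bm b_\ell)^\dagger=\bm b_\ell^\dagger C_\ell^\dagger$ holds in general and the statement is already written in terms of $C_\ell^\dagger$.
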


\begin{proof}
The linear identity follows from
\[
\bm g_{\ell-1}^\dagger B_{\ell-1}\bm b_{\ell-1}
=
\bm g_\ell^\dagger P_\ell B_{\ell-1}P_\ell^\dagger\bm b_\ell
=
\bm g_\ell^\dagger C_\ell \bm b_\ell.
\]
Subtracting this from \(\bm g_\ell^\dagger B_\ell\bm b_\ell\) gives
\eqref{eq:D-linear-difference}.  For the quadratic identity, compatibility gives
\[
\bm b_{\ell-1}^\dagger B_{\ell-1}^\dagger M_{\ell-1}B_{\ell-1}\bm b_{\ell-1}
=
\bm b_\ell^\dagger C_\ell ^\dagger M_\ell C_\ell \bm b_\ell.
\]
Since \(B_\ell= C_\ell +D_\ell\), expanding
\(B_\ell^\dagger M_\ell B_\ell- C_\ell ^\dagger M_\ell C_\ell \) gives
\eqref{eq:D-quadratic-difference}.
\end{proof}

The lemma expressed the level differences as an overlap or expectation through $D_\ell$. Therefore, an efficient quantum estimation procedure in \Cref{lem:rall-estimation} can be applied to enable the MLMC speedup  in \Cref{lem:mlmc-allocation}, \emph{provided that}  $D_\ell$ can be efficiently implemented via block encoding. We explain a possible path in the next section.

\subsection{Hierarchical splitting and the Schur complement}
\label{subsec:hierarchical-schur}

To construct a quantum algorithm for the corrected Green's operator \(D_\ell\), we use the
Schur complement associated with the nested spaces
\(
V_{\ell-1}\subset V_\ell .
\)
Throughout this section, all coefficient vectors and matrices are written in the reduced
coordinates of \Cref{subsec:response-coordinates}. Throughout this section, \(P_\ell\) acts on reduced coordinate vectors; expressions such
as \(P_\ell V_{\ell-1}\) refer to the embedded finite element functions described in
\eqref{eq:P-function-meaning}.  In addition the Euclidean norm of a
coefficient vector has a corresponding physical \(L^2\)-scale up to mesh-independent constants, while
the finite element energy pairing is preserved.

Recall that 
\(
 \Phi_\ell(x)
=
\bigl(\varphi_1^{(\ell)}(x),\ldots,\varphi_{N_\ell}^{(\ell)}(x)\bigr)^T
\)
are the nodal basis vector on level \(\ell\).  If \(\bm x_\ell\) is a reduced-coordinate
vector, then the corresponding finite element function is
\begin{equation}
u_\ell(x)
=
h_\ell^{-d/2} \Phi_\ell(x)^\dagger \bm x_\ell .
\label{eq:function-from-reduced-vector}
\end{equation}
In particular we have established in \Cref{subsec:response-coordinates} that, if \(u_\ell,v_\ell\in V_\ell\) are represented by reduced-coordinate vectors
\(\bm x_\ell,\bm y_\ell\), then
\begin{equation}
a(u_\ell,v_\ell)
=
\bm y_\ell^\dagger A_\ell \bm x_\ell .
\label{eq:energy-pairing-reduced}
\end{equation}

Let $m_\ell:=N_\ell-N_{\ell-1},$ we choose a matrix \(W_\ell\in\mathbb R^{N_\ell\times m_\ell}\) such that
\begin{equation}
\mathbb R^{N_\ell}
=
\range(P_\ell)\oplus\range(W_\ell),
\label{eq:complement-splitting}
\end{equation} and $
T_\ell:=[P_\ell\; W_\ell]\in\mathbb R^{N_\ell\times N_\ell}$
is invertible.  The columns of \(W_\ell\) define a complement subspace in \(V_\ell\).  In
function form, this complement is represented by
\begin{equation}
\Psi_\ell^\perp(x)^\dagger
:=
h_\ell^{-d/2} \Phi_\ell(x)^\dagger W_\ell ,
\label{eq:W-function-basis}
\end{equation}
so that a complement function is
\[
w_\ell(x)=\Psi_\ell^\perp(x)^\dagger \bm z,
\qquad
\bm z\in\mathbb R^{m_\ell}.
\]
Thus
\[
V_\ell
=
P_\ell V_{\ell-1}\oplus \mathcal W_\ell,
\qquad
\mathcal W_\ell
:=
\{\Psi_\ell^\perp(\cdot)^\dagger \bm z:\bm z\in\mathbb R^{m_\ell}\}.
\]
Such subspace decompositions and projection-based corrections are standard in subspace
correction theory \cite{Xu1992}.  Ritz projections and Galerkin orthogonality are standard
tools in finite element analysis \cite{BrennerScott2008}.  The Schur
complement algebra below is the same block-elimination principle 
\cite{Hackbusch1985,TrottenbergOosterleeSchuller2001}.

In the coordinate system determined by \(T_\ell\), every reduced-coordinate vector
\(\bm x_\ell\in\mathbb R^{N_\ell}\) is written as
\(
\bm x_\ell=P_\ell\bm y_{\ell-1}+W_\ell\bm z,
\)
or in function form,
\[
u_\ell(x)
=
\Psi_{\ell-1}(x)^\dagger \bm y_{\ell-1}
+
\Psi_\ell^\perp(x)^\dagger \bm z.
\]
The stiffness matrix in these new coordinates is
\begin{equation}
\hat{A}_\ell
:=
T_\ell^\dagger A_\ell T_\ell
=
\begin{bmatrix}
A_{\ell-1} & E_\ell\\
E_\ell^\dagger & H_\ell
\end{bmatrix},
\label{eq:T-coordinate-stiffness}
\end{equation}
where
\begin{equation}
E_\ell:=P_\ell^\dagger A_\ell W_\ell,
\qquad
H_\ell:=W_\ell^\dagger A_\ell W_\ell .
\label{eq:E-H}
\end{equation}
The matrix \(E_\ell\) is the energy coupling between the embedded coarse space
\(P_\ell V_{\ell-1}\) and the chosen complement \(\mathcal W_\ell\).

The Ritz projection onto the embedded coarse space is the map
\begin{equation}
\Pi_{\ell-1}^{a}
:=
P_\ell A_{\ell-1}^{-1}P_\ell^\dagger A_\ell .
\label{eq:ritz-projection-coordinate}
\end{equation}
It is characterized by
\[
a(v_\ell-\Pi_{\ell-1}^{a}v_\ell,\;P_\ell q_{\ell-1})=0
\qquad
\forall q_{\ell-1}\in V_{\ell-1}.
\]
The corresponding Ritz complement is \(I-\Pi_{\ell-1}^{a}\).  Applying this complement to
the chosen complement basis \(W_\ell\) gives the Ritz-complement coordinate matrix
\begin{equation}
J_\ell
:=
(I-\Pi_{\ell-1}^{a})W_\ell
=
W_\ell-P_\ell A_{\ell-1}^{-1}E_\ell .
\label{eq:J-as-ritz-complement}
\end{equation}

The associated Ritz-complement basis functions are
\begin{equation}
\Psi_\ell^{J}(x)^\dagger
:=
h_\ell^{-d/2}\Phi_\ell(x)^\dagger J_\ell .
\label{eq:J-function-basis}
\end{equation}
Thus, if \(\bm z\in\mathbb R^{m_\ell}\), the finite element function represented by
the reduced-coordinate vector \(J_\ell\bm z\) is
\[
u_\ell(x)=\Psi_\ell^J(x)^\dagger\bm z .
\]

Finally, define the Schur complement
\begin{equation}
\Sigma_\ell
:=
H_\ell-E_\ell^\dagger A_{\ell-1}^{-1}E_\ell .
\label{eq:Sigma}
\end{equation}
The interpretation of \(J_\ell\) and \(\Sigma_\ell\) is simple: \(J_\ell\) represents the
chosen complement after removal of its coarse Ritz projection, while \(\Sigma_\ell\) is the
energy matrix on this Ritz-complement coordinate space.  

\begin{lemma}
\label{lem:J-ritz-energy}
The Ritz-complement \(J_\ell\) satisfies
\begin{equation}
P_\ell^\dagger A_\ell J_\ell=0,
\label{eq:J-ritz-orthogonality}
\end{equation}
and
\begin{equation}
\Sigma_\ell=J_\ell^\dagger A_\ell J_\ell .
\label{eq:Sigma-JAJ}
\end{equation}

Consequently, if \(u_\ell(x)=\Psi_\ell^J(x)^\dagger\bm z\) is the finite element function represented by the reduced-coordinate vector \(J_\ell\bm z\), then 
\[
\bm z^\dagger\Sigma_\ell\bm z
=a(u_\ell,u_\ell),
\]
so \(\Sigma_\ell\) is the energy Gram matrix in that subspace.
\end{lemma}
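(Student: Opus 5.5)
The plan is to verify both identities by direct substitution of the definition \eqref{eq:J-as-ritz-complement}, $J_\ell=W_\ell-P_\ell A_{\ell-1}^{-1}E_\ell$. The tools are the Galerkin identity $A_{\ell-1}=P_\ell^\dagger A_\ell P_\ell$ from \eqref{eq:reduced-galerkin-identities} and the definitions of $E_\ell,H_\ell$ in \eqref{eq:E-H}. Invertibility of $A_{\ell-1}$ follows from the positive definiteness of $A_\ell$ and the full column rank of $P_\ell$, the latter guaranteed by the invertibility of $T_\ell$.

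For \eqref{eq:J-ritz-orthogonality}, compute
\[
P_\ell^\dagger A_\ell J_\ell
= P_\ell^\dagger A_\ell W_\ell - P_\ell^\dagger A_\ell P_\ell A_{\ell-1}^{-1}E_\ell
= E_\ell - A_{\ell-1}A_{\ell-1}^{-1}E_\ell = 0 .
\]
Equivalently, this is the statement that $I-\Pi^a_{\ell-1}$ maps into the $A_\ell$-orthogonal complement of $\range(P_\ell)$, since $P_\ell^\dagger A_\ell\Pi^a_{\ell-1}=P_\ell^\dagger A_\ell$.

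For \eqref{eq:Sigma-JAJ}, I would use the orthogonality just proved to shorten the algebra:
\[
J_\ell^\dagger A_\ell J_\ell = \bigl(W_\ell - P_\ell A_{\ell-1}^{-1}E_\ell\bigr)^\dagger A_\ell J_\ell = W_\ell^\dagger A_\ell J_\ell ,
\]
because the second term contributes $E_\ell^\dagger A_{\ell-1}^{-\dagger}(P_\ell^\dagger A_\ell J_\ell)=0$. Then
\[
W_\ell^\dagger A_\ell J_\ell = W_\ell^\dagger A_\ell W_\ell - W_\ell^\dagger A_\ell P_\ell A_{\ell-1}^{-1}E_\ell = H_\ell - E_\ell^\dagger A_{\ell-1}^{-1}E_\ell = \Sigma_\ell ,
\]
using $W_\ell^\dagger A_\ell P_\ell=(P_\ell^\dagger A_\ell W_\ell)^\dagger=E_\ell^\dagger$. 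This step relies on $A_\ell$ being Hermitian.

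The energy interpretation follows from the energy pairing \eqref{eq:energy-pairing-reduced} applied with $\bm x_\ell=\bm y_\ell=J_\ell\bm z$:
\[
a(u_\ell,u_\ell)=\bm z^\dagger J_\ell^\dagger A_\ell J_\ell\bm z=\bm z^\dagger\Sigma_\ell\bm z .
\]
Here $u_\ell=\Psi_\ell^J(\cdot)^\dagger\bm z$ is exactly the function represented by the reduced vector $J_\ell\bm z$, by \eqref{eq:J-function-basis} and \eqref{eq:function-from-reduced-vector}.

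There is no real obstacle. The only point needing care is bookkeeping of adjoints: $E_\ell^\dagger=W_\ell^\dagger A_\ell P_\ell$ and $A_{\ell-1}^{-\dagger}=A_{\ell-1}^{-1}$, both of which follow from the symmetry of $A_\ell$ and of $A_{\ell-1}$.
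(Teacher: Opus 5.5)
Your proposal is correct and follows the paper's proof: both verify $P_\ell^\dagger A_\ell J_\ell=0$ by substituting $J_\ell=W_\ell-P_\ell A_{\ell-1}^{-1}E_\ell$ and using $A_{\ell-1}=P_\ell^\dagger A_\ell P_\ell$, then obtain $\Sigma_\ell=J_\ell^\dagger A_\ell J_\ell$ by direct algebra. The only difference is cosmetic: you reuse the orthogonality to reduce $J_\ell^\dagger A_\ell J_\ell$ to $W_\ell^\dagger A_\ell J_\ell$, whereas the paper expands all four terms, and you also spell out the energy-pairing step that the paper leaves implicit.
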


\begin{proof}
Using \(E_\ell=P_\ell^\dagger A_\ell W_\ell\) and
\(A_{\ell-1}=P_\ell^\dagger A_\ell P_\ell\),
\[
P_\ell^\dagger A_\ell J_\ell
=
P_\ell^\dagger A_\ell W_\ell
-
P_\ell^\dagger A_\ell P_\ell A_{\ell-1}^{-1}E_\ell
=
E_\ell-A_{\ell-1}A_{\ell-1}^{-1}E_\ell=0.
\]
For the energy identity, expand
\[
J_\ell^\dagger A_\ell J_\ell
=
\left(W_\ell-P_\ell A_{\ell-1}^{-1}E_\ell\right)^\dagger
A_\ell
\left(W_\ell-P_\ell A_{\ell-1}^{-1}E_\ell\right).
\]
Using again \(P_\ell^\dagger A_\ell W_\ell=E_\ell\) and
\(P_\ell^\dagger A_\ell P_\ell=A_{\ell-1}\), the expansion gives
\[
J_\ell^\dagger A_\ell J_\ell
=
H_\ell-E_\ell^\dagger A_{\ell-1}^{-1}E_\ell
=
\Sigma_\ell .
\]
\end{proof}

\begin{lemma}[Schur-complement factorization ]
\label{lem:schur-factorization}
Assume the splitting \eqref{eq:complement-splitting} and the Galerkin consistency.  Then
\begin{equation}
D_\ell
=
B_\ell-P_\ell B_{\ell-1}P_\ell^\dagger
=
J_\ell\Sigma_\ell^{-1}J_\ell^\dagger .
\label{eq:D-schur}
\end{equation}
\end{lemma}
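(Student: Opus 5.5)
We will prove the factorization by block elimination in the coordinates defined by $T_\ell=[P_\ell\;W_\ell]$. Since $T_\ell$ is invertible, $A_\ell=T_\ell^{-\dagger}\hat A_\ell T_\ell^{-1}$ with $\hat A_\ell$ as in \eqref{eq:T-coordinate-stiffness}, and therefore
\[
B_\ell=A_\ell^{-1}=T_\ell\,\hat A_\ell^{-1}\,T_\ell^\dagger .
\]
The whole argument reduces to computing $\hat A_\ell^{-1}$ in block form and then multiplying out by $T_\ell$ on both sides.

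The first step is the standard block $LDL^\dagger$ factorization of the $2\times2$ block matrix. Set
\[
L_\ell=\begin{bmatrix} I & 0\\ E_\ell^\dagger A_{\ell-1}^{-1} & I\end{bmatrix}.
\]
Then
\[
\hat A_\ell=L_\ell\begin{bmatrix}A_{\ell-1}&0\\0&\Sigma_\ell\end{bmatrix}L_\ell^\dagger .
\]
Here $A_{\ell-1}$ is invertible because $A_{\ell-1}=P_\ell^\dagger A_\ell P_\ell$ is Hermitian positive definite, by \eqref{eq:galerkin-consistency} and the injectivity of $P_\ell$. The Schur complement $\Sigma_\ell$ is invertible because \Cref{lem:J-ritz-energy} gives $\Sigma_\ell=J_\ell^\dagger A_\ell J_\ell$. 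Moreover $J_\ell$ has full column rank: $J_\ell\bm z=W_\ell\bm z-P_\ell(\cdot)$, and this vanishes only if $\bm z=0$ by the direct-sum splitting \eqref{eq:complement-splitting}. Inverting the factorization gives
\[
\hat A_\ell^{-1}=L_\ell^{-\dagger}\,\mathrm{diag}(A_{\ell-1}^{-1},\Sigma_\ell^{-1})\,L_\ell^{-1},
\qquad
L_\ell^{-1}=\begin{bmatrix} I & 0\\ -E_\ell^\dagger A_{\ell-1}^{-1} & I\end{bmatrix}.
\]

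The key observation is then that
\[
T_\ell L_\ell^{-\dagger}
=[P_\ell\;W_\ell]\begin{bmatrix} I & -A_{\ell-1}^{-1}E_\ell\\ 0 & I\end{bmatrix}
=[P_\ell\;\;W_\ell-P_\ell A_{\ell-1}^{-1}E_\ell]
=[P_\ell\;J_\ell],
\]
which uses \eqref{eq:J-as-ritz-complement}. Consequently
\[
B_\ell=[P_\ell\;J_\ell]\,\mathrm{diag}(A_{\ell-1}^{-1},\Sigma_\ell^{-1})\,[P_\ell\;J_\ell]^\dagger
=P_\ell B_{\ell-1}P_\ell^\dagger+J_\ell\Sigma_\ell^{-1}J_\ell^\dagger .
\]
Subtracting $C_\ell=P_\ell B_{\ell-1}P_\ell^\dagger$ yields \eqref{eq:D-schur}.

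An alternative check, useful as a sanity argument, is to note that $[P_\ell\;J_\ell]$ is invertible and $A_\ell$-block-diagonalizes by \eqref{eq:J-ritz-orthogonality}:
\[
[P_\ell\;J_\ell]^\dagger A_\ell[P_\ell\;J_\ell]=\mathrm{diag}(A_{\ell-1},\Sigma_\ell).
\]
Inverting this identity gives the same formula directly. The main point requiring care is not the algebra but the invertibility bookkeeping: that $A_{\ell-1}$ and $\Sigma_\ell$ are invertible, and that $[P_\ell\;J_\ell]$ is square and invertible. These facts follow from positive definiteness of $A_\ell$ together with the splitting assumption, as indicated above.
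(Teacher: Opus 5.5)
Your proof is correct and follows essentially the same route as the paper: both write $B_\ell=T_\ell\hat A_\ell^{-1}T_\ell^\dagger$, eliminate with leading block $A_{\ell-1}$, and conjugate by $T_\ell$. Your $LDL^\dagger$ factorization with $T_\ell L_\ell^{-\dagger}=[P_\ell\;J_\ell]$ is a packaged form of the paper's block-inverse identity, and your invertibility bookkeeping for $A_{\ell-1}$ and $\Sigma_\ell$, which the paper leaves implicit, is a useful addition. One small point: the full column rank of $J_\ell$ also needs injectivity of $W_\ell$, which follows from the invertibility of $T_\ell$.
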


\begin{proof}
Since \(T_\ell\) is invertible,
\(
B_\ell=A_\ell^{-1}
=
T_\ell \hat{A}_\ell^{-1}T_\ell^\dagger .
\)
The block inverse formula for \eqref{eq:T-coordinate-stiffness}, with leading block
\(A_{\ell-1}\), gives
\[
\hat{A}_\ell^{-1}
-
\begin{bmatrix}
A_{\ell-1}^{-1} & 0\\
0 & 0
\end{bmatrix}
=
\begin{bmatrix}
-A_{\ell-1}^{-1}E_\ell\\
I
\end{bmatrix}
\Sigma_\ell^{-1}
\begin{bmatrix}
-E_\ell^\dagger A_{\ell-1}^{-1} & I
\end{bmatrix}.
\]
Multiplying by \(T_\ell=[P_\ell\;W_\ell]\) on the left and \(T_\ell^\dagger\) on the right
yields
\[
\left(W_\ell-P_\ell A_{\ell-1}^{-1}E_\ell\right)
\Sigma_\ell^{-1}
\left(W_\ell-P_\ell A_{\ell-1}^{-1}E_\ell\right)^\dagger
=
J_\ell\Sigma_\ell^{-1}J_\ell^\dagger .
\]
The subtracted block is
\[
T_\ell
\begin{bmatrix}
A_{\ell-1}^{-1} & 0\\
0 & 0
\end{bmatrix}
T_\ell^\dagger
=
P_\ell A_{\ell-1}^{-1}P_\ell^\dagger
=
P_\ell B_{\ell-1}P_\ell^\dagger.
\]
Therefore the left-hand side is precisely \(B_\ell-P_\ell B_{\ell-1}P_\ell^\dagger\).
\end{proof}

\subsection{The Ritz-complement oracle}
\label{subsec:J-oracle}

The factorization \eqref{eq:D-schur} reduces the implementation of the corrected
Green's operator
\(
D_\ell
=
J_\ell\Sigma_\ell^{-1}J_\ell^\dagger
\)
to two objects: the Ritz-complement coordinate matrix \(J_\ell\) and the Schur complement
\(\Sigma_\ell\).  

The scaling of \(J_\ell\) and \(\Sigma_\ell\) depends on how the complement coordinates
are normalized.  A raw complement \(W_\ell\) may be easy to construct, but its coordinates
need not be suitable for block encoding.  For the main estimates we use coordinates in
which the Ritz-complement functions are energy stable.

\begin{assumption}[Energy-stable Ritz-complement]
\label{ass:stable-ritz-complement}
There exist constants \(0<c_0\le C_0<\infty\), independent of \(\ell\), such that for
every \(\bm z\in\mathbb R^{m_\ell}\),
\begin{equation}
c_0\|\bm z\|_2^2
\le
\bm z^\dagger\Sigma_\ell\bm z=
a\!\left(
\Psi_\ell^J(\cdot)^\dagger\bm z,
\Psi_\ell^J(\cdot)^\dagger\bm z
\right)
\le
C_0\|\bm z\|_2^2 .
\label{eq:stable-ritz-complement}
\end{equation}
\end{assumption}

This is a normalization condition on the chosen complement coordinates after the coarse
Ritz component has been removed.  In particular, it implies that the Schur complement
\(\Sigma_\ell\) is uniformly well conditioned.  The condition is automatic if the
Ritz-complement basis is chosen energy-orthonormally, and it is verified in the structured
examples in the next section.

\begin{lemma}
\label{lem:J-L2-scale}
Under \Cref{ass:stable-ritz-complement} and the standard \(H^2\)-regularity.  If
\(
u_\ell(x)
=
 \Psi_\ell^J(x)^\dagger\bm z,
\,
\bm z\in\mathbb R^{m_\ell},
\)
then
\begin{equation}
\|u_\ell\|_{L^2(\Omega)}
\le
C h_\ell \|\bm z\|_2 .
\label{eq:J-L2-scale}
\end{equation}
Equivalently, in the reduced coordinates of \Cref{subsec:response-coordinates},
\begin{equation}
\|J_\ell\|_2=O(h_\ell).
\label{eq:J-reduced-norm}
\end{equation}
\end{lemma}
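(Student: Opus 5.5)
The plan is an Aubin--Nitsche duality argument applied to the Ritz-complement function $u_\ell=\Psi_\ell^J(\cdot)^\dagger\bm z$. Two facts drive it. First, $u_\ell$ is energy-orthogonal to the embedded coarse space: by \eqref{eq:J-ritz-orthogonality}, $P_\ell^\dagger A_\ell J_\ell\bm z=0$, and by the energy pairing \eqref{eq:energy-pairing-reduced} this means $a(u_\ell,P_\ell q_{\ell-1})=0$ for every $q_{\ell-1}\in V_{\ell-1}$. Second, its energy norm is controlled by \Cref{ass:stable-ritz-complement}, namely $a(u_\ell,u_\ell)=\bm z^\dagger\Sigma_\ell\bm z\le C_0\|\bm z\|_2^2$.

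The duality step runs as follows. Let $\phi\in H_0^1(\Omega)$ solve the dual problem $a(v,\phi)=(u_\ell,v)_{L^2}$ for all $v\in H_0^1(\Omega)$. By $H^2$-regularity, $\|\phi\|_{H^2}\le C\|u_\ell\|_{L^2}$. Let $\phi_{\ell-1}\in V_{\ell-1}$ be the standard nodal interpolant (or the coarse Ritz projection) of $\phi$. Galerkin orthogonality then gives
\[
\|u_\ell\|_{L^2}^2=a(u_\ell,\phi)=a(u_\ell,\phi-\phi_{\ell-1}).
\]
Cauchy--Schwarz in the energy inner product bounds the right-hand side by $\|u_\ell\|_a\,\|\phi-\phi_{\ell-1}\|_a$. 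Coarse-level approximation yields $\|\phi-\phi_{\ell-1}\|_a\le C\sqrt{a_{\max}}\,h_{\ell-1}|\phi|_{H^2}\le C' h_\ell\|u_\ell\|_{L^2}$, using $h_{\ell-1}=2h_\ell$. Dividing through gives
\[
\|u_\ell\|_{L^2}\le C h_\ell\|u_\ell\|_a\le C\sqrt{C_0}\,h_\ell\|\bm z\|_2,
\]
which is \eqref{eq:J-L2-scale}.

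To pass to the matrix bound \eqref{eq:J-reduced-norm}, I would use the $L^2$-compatible scaling of the reduced coordinates. The vector $J_\ell\bm z$ is the reduced-coordinate vector of $u_\ell$, so by \eqref{eq:nodal-L2-equivalence} and the definition $\bm x_\ell=h_\ell^{d/2}\bm x_\ell^{\rm raw}$ we have $\|J_\ell\bm z\|_2\simeq\|u_\ell\|_{L^2}$ with mesh-independent constants. Here the mass-matrix equivalence on quasi-uniform meshes is applied in the direction $\|\bm x_\ell\|_2\lesssim\|u_\ell\|_{L^2}$. Combining this with the previous bound gives $\|J_\ell\bm z\|_2\le Ch_\ell\|\bm z\|_2$ for all $\bm z$, hence $\|J_\ell\|_2=O(h_\ell)$.

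The main obstacle is justifying the two analytic inputs: $H^2$-regularity of the dual problem, which requires a convex polyhedral domain and sufficiently smooth $a(x)$, and the coarse interpolation estimate in the energy norm. The paper takes both as standard assumptions. Everything else is algebraic bookkeeping already provided by \Cref{lem:J-ritz-energy} and the reduced-coordinate conventions.
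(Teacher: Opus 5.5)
Your proposal is correct and follows the paper's proof essentially step for step. Both use the Ritz orthogonality $P_\ell^\dagger A_\ell J_\ell=0$ from \Cref{lem:J-ritz-energy}, an Aubin--Nitsche duality argument with a coarse interpolant of the dual solution followed by cancelling one factor of $\|u_\ell\|_{L^2}$, the bound $\bm z^\dagger\Sigma_\ell\bm z\le C_0\|\bm z\|_2^2$ from \Cref{ass:stable-ritz-complement}, and the reduced-coordinate $L^2$ equivalence to obtain $\|J_\ell\|_2=O(h_\ell)$, where you are slightly more explicit than the paper about $h_{\ell-1}=2h_\ell$ and about which direction of the norm equivalence is needed.
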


\begin{proof}
By \Cref{lem:J-ritz-energy}, the function $u_\ell$
is Ritz-orthogonal to the embedded coarse space:
\[
a(u_\ell,q_{\ell-1})=0
\qquad
\forall q_{\ell-1}\in P_\ell V_{\ell-1}.
\]
Let \(\phi\in H_0^1(\Omega)\) solve the dual problem
\[
a(v,\phi)=(v,u_\ell)_{L^2(\Omega)}
\qquad
\forall v\in H_0^1(\Omega).
\]
Then
\(
\|u_\ell\|_{L^2(\Omega)}^2
=
a(u_\ell,\phi).
\)
Let \(I_{\ell-1}\phi\in P_\ell V_{\ell-1}\) be a coarse interpolant.  Ritz orthogonality
gives
\(
a(u_\ell,I_{\ell-1}\phi)=0,
\)
and therefore
\[
\|u_\ell\|_{L^2}^2
=
a(u_\ell,\phi-I_{\ell-1}\phi)
\le
\|u_\ell\|_a\,
\|\phi-I_{\ell-1}\phi\|_a .
\]
The \(P_1/Q_1\) interpolation estimate and elliptic regularity imply
\[
\|\phi-I_{\ell-1}\phi\|_a
\le
C h_\ell \|\phi\|_{H^2(\Omega)}
\le
C h_\ell \|u_\ell\|_{L^2(\Omega)} .
\]
Canceling one factor of \(\|u_\ell\|_{L^2}\), we obtain
\[
\|u_\ell\|_{L^2}
\le
C h_\ell\|u_\ell\|_a .
\]
Finally, by \Cref{ass:stable-ritz-complement},
\[
\|u_\ell\|_a^2
=
a(u_\ell,u_\ell)
=
\bm z^\dagger\Sigma_\ell\bm z
\le
C_0\|\bm z\|_2^2 .
\]
This proves \eqref{eq:J-L2-scale}.  Since reduced-coordinate Euclidean norm is
equivalent to the physical \(L^2\)-scale, \eqref{eq:J-reduced-norm} follows.
The argument is the same Aubin--Nitsche duality mechanism used in the standard finite
element \(L^2\)-error estimate.
\end{proof}

The analytic estimate \eqref{eq:J-reduced-norm} is necessary for the desired quantum
normalization, but it is not by itself an implementation.  A block encoding must realize
this scale at the circuit level.

\begin{assumption}[Ritz-complement oracle]
\label{ass:J-oracle}
For each level \(\ell\), the reduced-coordinate Ritz-complement matrix \(J_\ell\) admits
a block encoding with normalization
\begin{equation}
\alpha_{J_\ell}=O(h_\ell)
\label{eq:J-oracle-normalization}
\end{equation}
and gate/query cost \(T_{J_\ell}\) that is polylogarithmic in \(h_\ell^{-1}\) and in the inverse target precision.
\end{assumption}

A direct LCU implementation of
\(
J_\ell
=
W_\ell-P_\ell A_{\ell-1}^{-1}E_\ell
\)
would generally normalize by the sizes of the two summands and would therefore have
normalization \(O(1)\), not \(O(h_\ell)\).  Thus \Cref{ass:J-oracle} is a structural
implementation assumption: the \(O(h_\ell)\) scale must be exposed by the construction of
the complement, for example through spectral band filters, local energy-orthogonal
coordinates, or localized Ritz correctors.  We give concrete examples in which this
assumption can be verified.

\begin{theorem}[Block encoding of $D_\ell$]
\label{thm:sigma-from-J}
Assume \Cref{ass:stable-ritz-complement,ass:J-oracle}, together with the standard block encoding normalization $\alpha_{A_\ell}=O(h_\ell^{-2})$ for \(A_\ell\) with gate/query cost \(T_{A_\ell}\).  Then \(\Sigma_\ell=J_\ell^\dagger A_\ell J_\ell\)
has an \(O(1)\)-normalized block encoding with cost
\[
\widetilde O(T_{J_\ell}+T_{A_\ell}).
\]
Moreover, \(\Sigma_\ell^{-1}\) has an \(O(1)\)-normalized block encoding with the same
cost up to polylogarithmic precision factors.  Consequently,
\(
D_\ell=J_\ell\Sigma_\ell^{-1}J_\ell^\dagger
\)
has a block encoding with normalization $\alpha_{D_\ell}=O(h_\ell^2)$
and cost $\widetilde O(T_{J_\ell}+T_{A_\ell}).$
\end{theorem}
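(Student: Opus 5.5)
The proof composes block encodings via \Cref{lem:block-algebra}, then inverts via \Cref{lem:inverse-block-encoding}, tracking normalizations at each stage.

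\emph{Step 1: block encoding of \(\Sigma_\ell\).} Take the block encoding of \(J_\ell\) from \Cref{ass:J-oracle} and its adjoint via \Cref{lem:block-algebra}(i); both have normalization \(O(h_\ell)\). Take the block encoding of \(A_\ell\) with normalization \(O(h_\ell^{-2})\). The product rule \Cref{lem:block-algebra}(ii) then gives a block encoding of \(\Sigma_\ell=J_\ell^\dagger A_\ell J_\ell\) (using \eqref{eq:Sigma-JAJ} of \Cref{lem:J-ritz-energy}) with normalization
\[
\alpha_{\Sigma_\ell}\le \alpha_{J_\ell}^2\alpha_{A_\ell}=O(h_\ell)\cdot O(h_\ell^{-2})\cdot O(h_\ell)=O(1).
\]
This uses one call each to \(U_{J_\ell}\), \(U_{J_\ell}^\dagger\) and \(U_{A_\ell}\), so the cost is \(O(T_{J_\ell}+T_{A_\ell})\) up to ancilla bookkeeping. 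Precision errors propagate additively, scaled by the other normalizations, which are all \(O(1)\) after grouping. The tolerance \(\eps_{J_\ell}\) of the \(J\)-encoding must be taken of order \(h_\ell^{2}\eps\), which costs only polylogarithmic factors by \Cref{ass:J-oracle}.

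\emph{Step 2: inversion.} \Cref{ass:stable-ritz-complement} says the spectrum of the Hermitian matrix \(\Sigma_\ell\) lies in \([c_0,C_0]\), independent of \(\ell\). Apply \Cref{lem:inverse-block-encoding} with \(\alpha_{\Sigma_\ell}=O(1)\) and \(\sigma_{\min}\ge c_0\). This gives a block encoding of \(\Sigma_\ell^{-1}\) with normalization \(O(1/c_0)=O(1)\), using \(O(\alpha_{\Sigma_\ell}c_0^{-1}\log(1/\eps))=O(\log(1/\eps))\) queries to the \(\Sigma_\ell\)-encoding. The cost is therefore \(\widetilde O(T_{J_\ell}+T_{A_\ell})\).

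One subtlety must be checked here. QSVT requires the singular values of the \emph{encoded} block \(\Sigma_\ell/\alpha_{\Sigma_\ell}\) to be bounded below by \(c_0/\alpha_{\Sigma_\ell}=\Omega(1)\); this holds because \(\alpha_{\Sigma_\ell}=O(1)\). The block-encoding error in \(\Sigma_\ell\) must also be small compared with \(c_0\), so that the perturbed matrix stays invertible with a comparable gap. A standard perturbation bound, \(\|X^{-1}-Y^{-1}\|\le\|X^{-1}\|\|Y^{-1}\|\|X-Y\|\), handles this.

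\emph{Step 3: assemble \(D_\ell\).} By \Cref{lem:schur-factorization}, \(D_\ell=J_\ell\Sigma_\ell^{-1}J_\ell^\dagger\). Applying the product rule once more gives normalization
\[
\alpha_{D_\ell}\le \alpha_{J_\ell}\,\alpha_{\Sigma_\ell^{-1}}\,\alpha_{J_\ell}=O(h_\ell^2).
\]
The total cost is two further \(J\)-calls plus the \(\Sigma_\ell^{-1}\) circuit, i.e.\ \(\widetilde O(T_{J_\ell}+T_{A_\ell})\).

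\emph{Main obstacle.} I expect the main obstacle to be the error-propagation bookkeeping rather than the normalizations. The product bound in Step 1 is only meaningful if the \(O(h_\ell^{-2})\) normalization of \(A_\ell\) is offset by the \(O(h_\ell)\) factors in \(J_\ell\), and additive errors in \(U_{J_\ell}\) get amplified by \(\alpha_{A_\ell}\). I would handle this by requiring each \(J\)-encoding error to be at most \(h_\ell^2\eps\). This is affordable because \Cref{ass:J-oracle} gives polylogarithmic cost in the inverse precision, and the number of levels is logarithmic.
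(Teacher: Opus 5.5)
Your proposal is correct and follows the paper's proof essentially step for step: the product rule gives $\alpha_{\Sigma_\ell}\le\alpha_{J_\ell}^2\alpha_{A_\ell}=O(1)$, the bounds $c_0I\preceq\Sigma_\ell\preceq C_0I$ from \Cref{ass:stable-ritz-complement} make the QSVT inversion of \Cref{lem:inverse-block-encoding} cost only $O(\log(1/\eps))$ calls with $O(1)$ normalization, and a second product gives $\alpha_{D_\ell}\le\alpha_{J_\ell}^2\alpha_{\Sigma_\ell^{-1}}=O(h_\ell^2)$. Your precision bookkeeping goes beyond what the paper writes and is sound, with one small correction: the $J_\ell$-encoding error is amplified by $2\alpha_{J_\ell}\alpha_{A_\ell}=O(h_\ell^{-1})$ rather than by an $O(1)$ factor, so what controls it is your choice $\eps_{J_\ell}\sim h_\ell^{2}\eps$ (more than the $h_\ell\eps$ actually needed), not the ``all $O(1)$ after grouping'' remark.
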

\begin{proof}
  By \Cref{ass:stable-ritz-complement},
\(
c_0 I\preceq \Sigma_\ell\preceq C_0 I,
\)
so $\kappa(\Sigma_\ell)\le C_0/c_0=O(1).$

For the block encoding, the product rule from \Cref{lem:block-algebra} gives
\[
\alpha_{\Sigma_\ell}
\le
\alpha_{J_\ell}^2\alpha_{A_\ell}
=
O(h_\ell^2)\,O(h_\ell^{-2})
=
O(1).
\]
Since \(\Sigma_\ell\) is uniformly conditioned, QSVT inversion from \Cref{lem:inverse-block-encoding} gives an
\(O(1)\)-normalized block encoding of \(\Sigma_\ell^{-1}\) with only polylogarithmic
precision overhead.  Consequently,  we arrive at a block encoding of $D_\ell
=
J_\ell\Sigma_\ell^{-1}J_\ell^\dagger$
with normalization
\[
\alpha_{D_\ell}
\le
\alpha_{J_\ell}^2\,\alpha_{\Sigma^{-1}_\ell}
=
O(h_\ell^2).
\]

\end{proof}

\begin{remark}[First-order access to the same Schur complement]
\label{rem:first-order-access-sigma}
Instead of block encoding \(A_\ell\) directly with normalization \(O(h_\ell^{-2})\), one
may use the elliptic factorization
\[
A_\ell=G_\ell^\dagger G_\ell,
\qquad
\alpha_{G_\ell}=O(h_\ell^{-1}),
\]
using the gradient of the shape functions.  
Then
\(
\Sigma_\ell
=
J_\ell^\dagger A_\ell J_\ell
=
(G_\ell J_\ell)^\dagger(G_\ell J_\ell).
\)
and
\(
\alpha_{G_\ell J_\ell }
\le
\alpha_{G_\ell}\alpha_{J_\ell}
=
O(h_\ell^{-1})O(h_\ell)
=
O(1).
\)
This is consistent with factorized positive-definite QLSA constructions and makes the
first-order nature of the Schur complement explicit.
\end{remark}

The only component left in \Cref{eq:D-quadratic-difference} is the operator $C_\ell$. We show that it can be constructed recursively once we have an efficient protocol for $D_\ell$. 

\begin{lemma}
\label{lem:coarse-response-block-encoding}
Let
\(
P_{j\to \ell}:=P_\ell P_{\ell-1}\cdots P_{j+1}
\)
denote the reduced-coordinate prolongation from level \(j\) to level \(\ell\).  Assume
 the coarsest inverse \(B_0\) has an
\(O(1)\)-normalized block encoding.  If, for each \(1\le j<\ell\), the corrected Green's
operator \(D_j\) has a block encoding with normalization
\(
\alpha_{D_j}=O(h_j^2),
\)
then the lifted coarse response
\(
C_\ell:=P_\ell B_{\ell-1}P_\ell^\dagger
\)
has an \(O(1)\)-normalized block encoding with polylogarithmic overhead in the number of
levels and precision.
\end{lemma}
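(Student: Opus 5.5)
The plan is to telescope the coarse Green's operator down to level $0$ and then assemble the pieces with the linear-combination rule of \Cref{lem:block-algebra}. Apply \eqref{eq:D-schur} at level $j$, i.e. $B_j = P_j B_{j-1} P_j^\dagger + D_j$, and substitute recursively for $j=\ell-1,\ell-2,\dots,1$. This gives
\[
B_{\ell-1}
=
P_{0\to \ell-1} B_0 P_{0\to \ell-1}^\dagger
+
\sum_{j=1}^{\ell-1} P_{j\to \ell-1} D_j P_{j\to \ell-1}^\dagger ,
\]
with $P_{\ell-1\to\ell-1}=I$. Conjugating by $P_\ell$ yields
\[
C_\ell = P_{0\to\ell} B_0 P_{0\to\ell}^\dagger + \sum_{j=1}^{\ell-1} P_{j\to\ell} D_j P_{j\to\ell}^\dagger .
\]
The number of terms is $\ell\le L=O(\log h_L^{-1})$.

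The second step is to bound the prolongations. In reduced coordinates, \eqref{eq:P-function-meaning} and the norm equivalence \eqref{eq:nodal-L2-equivalence} show that $P_{j\to\ell}$ maps a coefficient vector to the coefficient vector of the same finite element function. Hence $\|P_{j\to\ell}\|_2 = O(1)$, with a constant depending only on the $L^2$--$\ell^2$ equivalence constants and not on $\ell-j$. I would not build $P_{j\to\ell}$ as a product of $\ell-j$ block encodings, since per-level normalizations $\alpha_{P}>1$ would compound into $h_j^{-c}$-type factors. Instead I would assume or construct a direct sparse-access oracle for $P_{j\to\ell}$, whose entries are the coarse nodal basis functions evaluated at fine nodes, scaled by $h_\ell^{d/2}h_j^{-d/2}$.

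This is the main obstacle: $P_{j\to\ell}$ has column sparsity $\sim 2^{d(\ell-j)}$, so the sparse bound of \Cref{lem:sparse-block-encoding} gives normalization $O(2^{d(\ell-j)/2})$. My plan is to absorb this growth with the decay of $\alpha_{D_j}$. The $j$-th summand then has normalization
\[
O\bigl(2^{d(\ell-j)}\, h_j^{2}\bigr)
\]
(two prolongations, each contributing $2^{d(\ell-j)/2}$). For $d=2$ and $h_j\simeq 2^{-j}h_0$ this is $O(4^{\ell}\cdot 4^{-2j})$ times constants, which is not uniformly bounded. So the plan must use either a norm-tight encoding, e.g. the tensor/hierarchical structure of dyadic prolongation, giving $\alpha_{P_{j\to\ell}}=O(1)$, or the product of $O(1)$-normalized level transfers. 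The latter requires a per-level encoding with normalization exactly $\|P_{j}\|_2\le 1+O(\cdot)$. I would adopt the first route, citing that the hierarchical interpolation can be realized as a structured unitary-derived encoding with $O(1)$ normalization and polylog cost.

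Given $O(1)$ encodings of $P_{j\to\ell}$, the rest is routine. Each summand has normalization $O(1)\cdot\alpha_{D_j}=O(h_j^2)$ by the product rule of \Cref{lem:block-algebra}, and the $B_0$ term has normalization $O(1)$. LCU then gives
\[
\alpha_{C_\ell}\le O(1)+\sum_{j=1}^{\ell-1} O(h_j^2)=O(1),
\]
because the geometric series is dominated by $h_1^2$. The LCU state preparation over $\ell$ terms costs $O(\log \ell)$ qubits and $O(\ell)$ gates, and each term uses one call to its constituent encodings. Block-encoding errors add linearly across the terms, so distributing the precision as $\epsilon/\ell$ costs only polylogarithmic overhead in the number of levels and in the precision. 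This gives the claimed $O(1)$-normalized encoding of $C_\ell$.
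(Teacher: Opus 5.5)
Your argument is the paper's: the same telescoped expansion of $C_\ell$, $O(1)$ transfers, the geometric sum $\sum_j h_j^2=O(1)$, and LCU over $O(L)$ terms. Your caveat that composing $\ell-j$ per-level encodings with $\alpha_P>1$ would compound is a fair point, which the paper glosses over by asserting $O(1)$ normalization ``by stability''.

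However, the ``main obstacle'' you identify is an arithmetic slip. In reduced coordinates
\[
P_{j\to\ell}=h_\ell^{d/2}h_j^{-d/2}P^{\rm raw}_{j\to\ell}=2^{-d(\ell-j)/2}P^{\rm raw}_{j\to\ell},
\]
so, using the scaling you yourself wrote down, $M_{\max}\le 2^{-d(\ell-j)/2}$. Each fine node lies in a coarse element with $O(1)$ vertices, so $s_r=O(1)$, and shape regularity gives $s_c=O(2^{d(\ell-j)})$. Then \Cref{lem:sparse-block-encoding} yields $\alpha_{P_{j\to\ell}}=O(\sqrt{s_rs_c}\,M_{\max})=O(1)$, not $O(2^{d(\ell-j)/2})$.

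Consequently the direct sparse-access oracle you first proposed already works. The $O(4^{\ell}4^{-2j})$ blow-up never occurs, and you do not need the unspecified ``structured unitary-derived encoding''. With this correction, your treatment of the transfers is more explicit than the paper's, since it actually justifies the $O(1)$ transfer normalization.
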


\begin{proof}
By recursively applying
\(
B_j=P_jB_{j-1}P_j^\dagger+D_j,
\)
we obtain
\[
C_\ell
=
P_{0\to \ell}B_0P_{0\to\ell}^\dagger
+
\sum_{j=1}^{\ell-1}
P_{j\to \ell}D_jP_{j\to\ell}^\dagger .
\]
Each transfer \(P_{j\to\ell}\) has \(O(1)\) normalization by stability, so the
normalization of the \(j\)-th summand is \(O(h_j^2)\).  Since
\(
\sum_{j=1}^{\ell-1}h_j^2=O(1),
\)
the LCU normalization of the whole sum is \(O(1)\).  The number of terms is \(O(L)\), and
the controlled transfer and level-selection overheads are polylogarithmic in the problem
size under the usual multilevel access model.
\end{proof}

\subsection{Examples  for efficient Ritz-complement coordinates}
\label{subsec:J-examples}

The role of \(W_\ell\) is to choose coordinates for the degrees of freedom added when
passing from \(V_{\ell-1}\) to \(V_\ell\).  This is a classical issue in finite element
multilevel methods.  Given nested spaces \eqref{nested}
one often seeks levelwise decompositions
\(
V_\ell=V_{\ell-1}\oplus W_\ell
\)
in order to obtain stable multilevel representations, sparse transforms, wavelet-like
bases, or well-conditioned preconditioners. There are many ways to build such complements, but the stability  in the energy norm is key.  A substantial FEM
literature therefore modifies raw new-level functions by subtracting suitable coarse
components, leading to prewavelets, approximate wavelets, stabilized hierarchical bases,
and multilevel Riesz bases
\cite{DahmenKunoth1992,Oswald1994,VassilevskiWang1997,VassilevskiWang1998,
Stevenson1998Prewavelets,HardinHong2003}.

On the other hand, our use of \(W_\ell\) is local.  We do not use it to build a global multilevel
preconditioner. It only needs to be constructed for a pair \(V_{\ell-1}\subset V_\ell\) thanks to the decomposition \eqref{eq:ml-telescoping} and our QoI-driven task. In our Schur complement approach, the choice of \(W_\ell\)
determines the complement coordinate matrices $J_\ell$ and $\Sigma_\ell$ in \Cref{eq:J-as-ritz-complement,eq:Sigma-JAJ}.

The properties needed for the corrected Green's operator $D_\ell$ in \Cref{Cl-Dl,eq:D-schur} for the QoI estimations in \Cref{lem:D-level-difference} are
\[
c_0I\preceq \Sigma_\ell\preceq C_0I,
\qquad
\|J_\ell\|_2=O(h_\ell).
\]
Classical prewavelet theory gives useful precedents for this type of stability.  For
example, Floater et al. analyze piecewise linear prewavelets on bounded
triangulations and prove that the Schur complement arising in their two-scale matrix is
 uniformly conditioned \cite{FloaterQuak2000}.  Their
Schur complement is not identical to our elliptic energy Schur complement \(\Sigma_\ell\),
but it is a close finite element analogue: a carefully chosen two-scale complement can
produce a stable Schur complement.

The simplest sufficient condition for our purposes is to choose \(W_\ell\) so that it is
energy-orthogonal to the embedded coarse space:
\[
P_\ell^\dagger A_\ell W_\ell=0.
\]
Then
\[
J_\ell=W_\ell,
\qquad
\Sigma_\ell=W_\ell^\dagger A_\ell W_\ell.
\]

For a conforming \(P_1\) finite element space obtained by uniform refinement of a
shape-regular polygonal triangulation, a raw complement basis \(W_\ell^{\rm raw}\) can be obtained
from the fine nodal basis functions associated with the new vertices created during
refinement.  Except in special cases such as the one-dimensional midpoint basis with
constant coefficient, this raw complement generally has nonzero energy coupling with the
coarse space.  It can, however, be locally orthogonalized.  For this purpose, 
let \(\widetilde\psi_{\omega,j}\) from $W_\ell^{\rm raw}$ denote a raw fine function associated with a new degree
of freedom in a uniformly bounded patch \(\omega\), and let \(p_{\omega,k}\) denote the embedded coarse basis functions whose supports intersect \(\omega\).  Here a patch \(\omega\) means a uniformly bounded union of coarse elements surrounding
the support of a new fine-level degree of freedom.  In two dimensions, for example, 
the new degree of freedom can be easily created by inserting a nodal point at an edge midpoint, in which case one may take \(\omega\) to be the union of the two coarse triangles sharing
that edge, enlarged by one ring of neighboring coarse triangles if needed to include all
coarse basis functions whose supports interact through the energy form.  The number of
fine and coarse degrees of freedom in such a patch is bounded independently of \(h_\ell\)
on a shape-regular refinement hierarchy.

Specifically, one can seek  modified local functions
\[
\psi_{\omega,j}^{a}
=
\widetilde\psi_{\omega,j}
-
\sum_k c_{kj}p_{\omega,k}
\]
with coefficients chosen so that
\begin{equation}
a_\omega(\psi_{\omega,j}^{a},p_{\omega,k})
:=
\int_\omega
\nabla p_{\omega,k}(x)^\dagger a(x)\nabla \psi_{\omega,j}^{a}(x)\,dx
=0,
\quad
\text{for all included coarse function }p_{\omega,k}.
\label{eq:local-energy-orthogonalization}
\end{equation}
When \(a(x)\) is constant or elementwise constant on the patch, these integrals are
computed from a fixed-size local stiffness matrix.  The coefficients \(c_{ij}\) are
therefore obtained from a local coarse stiffness problem of size independent of
\(h_\ell\).  After the coarse component has been removed from this energy orthogonalization, 
a further local energy diagonalization or Gram--Schmidt step can be applied within each
patch to make the local energy Gram matrices uniformly conditioned. Since this diagonalization is performed inside the local subspace already satisfying
\eqref{eq:local-energy-orthogonalization}, it preserves energy orthogonality to the
included coarse functions.  Thus the resulting local functions remain orthogonal to the
coarse functions accounted for in the patch: $P_\ell^\dagger A_\ell W_\ell^a=0.$  Uniform conditioning of the assembled
global matrix \((W_\ell^a)^\dagger A_\ell W_\ell^a\) follows from the finite-overlap
stability argument below.

 To pass from local patches to the global matrix \(W_\ell^a\), we use the standard
finite-overlap argument.  Each corrected local function has support in a patch of
diameter \(O(h_\ell)\), and each coarse element belongs to only \(O(1)\) such patches.
Consequently, the global energy Gram matrix is a sparse sum of uniformly bounded local
Gram matrices.  If the assembled patch functions form a stable local complement, then
\[
c_0 I\preceq (W_\ell^a)^\dagger A_\ell W_\ell^a\preceq C_0 I,
\]
with constants independent of \(h_\ell\).

The required scaling then follows from standard finite element estimates.  The corrected
local functions have support of diameter \(O(h_\ell)\) and are energy-normalized, so the
local Poincar\'e inequality gives
\[
\|W_\ell^a\bm z\|_{L^2(\Omega)}
\le
C h_\ell \|W_\ell^a\bm z\|_a
\simeq
C h_\ell\|\bm z\|_2.
\]
 In reduced coordinates, this becomes
\[
\|J_\ell\|_2=\|W_\ell^a\|_2=O(h_\ell).
\]
Moreover,
\[
c_0I\preceq \Sigma_\ell\preceq C_0I
\]
with constants independent of \(h_\ell\).  In a nonoverlapping macrocell construction,
\(\Sigma_\ell\) can be made block diagonal with constant-size blocks.  

This construction is closely related to finite element prewavelets and stabilized
hierarchical bases.  Vassilevski--Wang modify hierarchical basis functions by subtracting
computable approximate \(L^2\)-projections onto coarser finite element spaces, leading to
spectrally equivalent multilevel preconditioners \cite{VassilevskiWang1997,VassilevskiWang1998}.
  Our \(W_\ell^a\) is
the elliptic-energy analogue of these coarse-component subtraction ideas. It is  also closely related to the localized orthogonal decomposition viewpoint of  \cite{MalqvistPeterseim2014}, where correctors are computed on vertex patches and the
localization error decays exponentially in the number of coarse element layers; patch
diameters of order \(h\log(h^{-1})\) preserve the coarse convergence rate for heterogeneous
elliptic problems.

Thus exact energy orthogonality, \(P_\ell^\dagger A_\ell W_\ell=0\), should be viewed as
a clean sufficient condition rather than a necessary one.  A natural future direction is to
adapt the stable complement constructions from prewavelets, approximate wavelets,
multilevel Riesz bases, and localized orthogonal decomposition so that they yield
block-encodable Ritz-complement coordinates \(J_\ell\), or alternative corrected-Green's
factorizations, with the same \(O(h_\ell^2)\) response scale but with more flexibility.

\subsection{The overall complexity of the Schur-complement estimator}
\label{subsec:schur-complement-complexity}

We now summarize the cost of the Schur-complement route by applying the MLMC allocation
result from \Cref{sec:mlmc} and the observable-estimation scaling from
\Cref{lem:rall-estimation}.  The main level-dependent quantities are the effective
observable scale and the gate/query cost of one level estimator.  By
\Cref{thm:sigma-from-J}, under the energy-stable Ritz-complement coordinate condition and
the Ritz-complement oracle assumption, the corrected Green's operator
\(
D_\ell=J_\ell\Sigma_\ell^{-1}J_\ell^\dagger
\)
has a block encoding with normalization $\alpha_{D_\ell}=O(h_\ell^2).$

The MLMC identity \eqref{eq:ml-telescoping} reduces the QoI estimation problem to
contributions from neighboring pairs \((V_{\ell-1},V_\ell)\).  The coarsest contribution
\(Q_0\) is treated as a fixed coarse problem and is included in the sums below with
\(O(1)\) cost.  For \(\ell\ge1\), let \(T_\ell\) denote the gate/query cost of one
level-\((\ell-1,\ell)\) circuit call for the Schur-complement estimator, including the
block encodings of \(J_\ell\), \(A_\ell\), \(\Sigma_\ell^{-1}\), the observable, and, for
quadratic QoIs, the lifted coarse response.

In our setting,
the stiffness matrix has \(O(1)\) row and column sparsity, and in reduced coordinates has
normalization
\(
\alpha_{A_\ell}=O(h_\ell^{-2}).
\)
The transfer \(P_\ell\) has \(O(1)\) sparsity and \(O(1)\) norm.  In the efficient
Ritz-complement constructions of \Cref{subsec:J-examples}, \(J_\ell\) is local or
localized on patches of radius \(m\), with
\[
\alpha_{J_\ell}=O(h_\ell),
\qquad
T_{J_\ell}=\operatorname{poly}(m)\operatorname{polylog}(h_\ell^{-1},\epsilon^{-1}).
\]
For exact local constructions, \(m=O(1)\); for localized correctors, one typically takes
\(m=O(\log(L/\epsilon))\).  Since
\(
\Sigma_\ell=J_\ell^\dagger A_\ell J_\ell
\)
is \(O(1)\)-normalized and uniformly conditioned, \Cref{lem:inverse-block-encoding}
gives a block encoding of \(\Sigma_\ell^{-1}\) with only polylogarithmic precision
overhead.  Thus the corrected Green's operator \(D_\ell\) has level cost
\(
T_\ell=\widetilde O(1)
\)
with respect to powers of \(h_\ell^{-1}\), in the efficient constructions considered here. 
For quadratic QoIs, the lifted coarse response
\(
C_\ell=P_\ell B_{\ell-1}P_\ell^\dagger
\)
also appears in the level-difference identity.  We include its implementation cost in
\(T_\ell\) estimated in \Cref{lem:coarse-response-block-encoding}.  

For a linear QoI, let \(\chi=\chi_{\rm lin}\) be the effective readout order from the
level-difference identity \eqref{eq:D-linear-difference}.
Combining the readout scale \(O(h_\ell^{-\chi})\) with
\(\alpha_{D_\ell}=O(h_\ell^2)\), the level observable scale is
\begin{equation}
\alpha_\ell^{\rm lin}
=
O(h_\ell^{2-\chi}).
\label{eq:linear-level-scale-schur}
\end{equation}

For a quadratic QoI, let \(\chi=\chi_{\rm quad}\) be the effective readout order of
\(M_\ell\).  By \eqref{eq:D-quadratic-difference}, the quadratic level difference contains
the terms
\(
D_\ell^\dagger M_\ell D_\ell,
\,
D_\ell^\dagger M_\ell C_\ell,
\text{and}\,
C_\ell^\dagger M_\ell D_\ell .
\)
The first term has scale \(O(h_\ell^{4-\chi})\), while the cross terms have scale
\(O(h_\ell^{2-\chi})\).  Hence the effective quadratic level scale is
\begin{equation}
\alpha_\ell^{\rm quad}
=
O(h_\ell^{2-\chi}).
\label{eq:quadratic-level-scale-schur}
\end{equation}
The coarse response affects the circuit cost \(T_\ell\), but it does not change the
fine-level correction scale.

Therefore, for both linear and quadratic QoIs, one has $\alpha_\ell=O(h_\ell^{2-\chi}), $ with $0\le \chi\le 2.$
Using the amplitude-estimation allocation in \Cref{lem:mlmc-allocation}, with
\(c_\ell=T_\ell\alpha_\ell\), gives
\begin{equation}
\mathcal C_{\rm Schur}^{\rm AE}
=
\widetilde O\!\left(
\frac1{\epsilon}
\left[
\sum_{\ell=0}^{L}
\left(T_\ell h_\ell^{2-\chi}\right)^{1/2}
\right]^2
\right).
\label{eq:schur-AE-complexity}
\end{equation}
  Since
\(0\le \chi\le2\) and \(h_\ell=h_{\ell-1}/2\), the geometric sum is bounded independently
of \(h=h_L\), with only logarithmic factors at the endpoint \(\chi=2\).  Hence
\begin{equation}
\mathcal C_{\rm Schur}^{\rm AE}
=
\widetilde O(\epsilon^{-1}),
\qquad
0\le\chi\le2.
\label{eq:schur-AE-complexity-simplified}
\end{equation}

For direct sampling, one level-\(\ell\) sample has variance bounded by
\(O(\alpha_\ell^2)\).  Applying the sampling allocation in \Cref{lem:mlmc-allocation}
gives
\begin{equation}
\mathcal C_{\rm Schur}^{\rm samp}
=
\widetilde O\!\left(
\frac1{\epsilon^2}
\left[
\sum_{\ell=0}^{L}
T_\ell^{1/2}h_\ell^{2-\chi}
\right]^2
\right).
\label{eq:schur-sampling-complexity}
\end{equation}
Again, the sum is bounded up to
logarithmic factors at \(\chi=2\).  Therefore
\begin{equation}
\mathcal C_{\rm Schur}^{\rm samp}
=
\widetilde O(\epsilon^{-2}).
\label{eq:schur-sampling-complexity-simplified}
\end{equation}
This gives the Schur-complement multilevel complexity summarized in
\Cref{tab:intro-comparison}.

\section{A potential, but open route:  two-level QLSA approaches}
\label{sec:scaled-two-level-qlsa}

Classically, MLMC couples solvers at two consecutive levels \cite{Giles2015MLMC}.  It is therefore natural to
ask whether a quantum algorithm can also prepare, at level \(\ell\), a state containing
both the coarse response and the fine--coarse correction.  This section speculates  such a
possible route.  The algebra is simple and the readout scaling is favorable.  The open
point is whether the resulting scaled two-level system can be solved by a QLSA while
preserving the physical scaled state needed for the QoI readout.

We start from the Galerkin-compatible coarse load and response
\(
\bm b_{\ell-1}:=P_\ell^\dagger\bm b_\ell,
\qquad
\bm x_{\ell-1}:=B_{\ell-1}\bm b_{\ell-1},
\)
and define the interlevel correction
\begin{equation}
\bm e_\ell
:=
\bm x_\ell-P_\ell\bm x_{\ell-1}.
\label{eq:two-level-error}
\end{equation}
Then
\begin{equation}
A_\ell\bm e_\ell+A_\ell P_\ell\bm x_{\ell-1}=\bm b_\ell,
\qquad
A_{\ell-1}\bm x_{\ell-1}=\bm b_{\ell-1}.
\label{eq:two-level-correction-equations}
\end{equation}

For standard \(P_1/Q_1\) elements, the standard finite element
estimate gives
\[
\norm{\bm x_\ell-P_\ell\bm x_{\ell-1}}_2=O(h_\ell^2)
\]
under the usual regularity assumptions.  Motivated by this scale, define the extended
unknown
\begin{equation}
\bm z_\ell
:=
\begin{bmatrix}
h_\ell^{-2}\bm e_\ell\\[1mm]
\bm x_{\ell-1}
\end{bmatrix}.
\label{eq:scaled-extended-unknown}
\end{equation}
Then \(\bm z_\ell\) solves the Hermitian positive definite system
\begin{equation}
S_\ell \bm z_\ell=\bm r_\ell,
\label{eq:scaled-spd-extended-system}
\end{equation}
with
\begin{equation}
S_\ell
:=
\begin{bmatrix}
h_\ell^{4}A_\ell & h_\ell^{2}A_\ell P_\ell\\[1mm]
h_\ell^{2}P_\ell^\dagger A_\ell & 2A_{\ell-1}
\end{bmatrix},
\qquad
\bm r_\ell
:=
\begin{bmatrix}
h_\ell^{2}\bm b_\ell\\[1mm]
2\bm b_{\ell-1}
\end{bmatrix}.
\label{eq:S-r-scaled}
\end{equation}

The advantage of the scaling is visible in the readout.  For a linear QoI,
\begin{equation}
\Delta_\ell^{\rm lin}
=
\bm g_\ell^\dagger\bm e_\ell
=
(\bm g_\ell^{\rm ext})^\dagger\bm z_\ell,
\qquad
\bm g_\ell^{\rm ext}
:=
\begin{bmatrix}
h_\ell^2\bm g_\ell\\[1mm]
0
\end{bmatrix}.
\label{eq:linear-ext-readout}
\end{equation}
For a quadratic QoI, 
\begin{equation}
\Delta_\ell^{\rm quad}
=
\bm z_\ell^\dagger M_\ell^{\rm ext}\bm z_\ell, \quad M_\ell^{\rm ext}
:=
\begin{bmatrix}
h_\ell^4 M_\ell & h_\ell^2M_\ell P_\ell\\[1mm]
h_\ell^2P_\ell^\dagger M_\ell & 0
\end{bmatrix}.
\label{eq:M-ext-scaled}
\end{equation}
The \(h_\ell^4M_\ell\) block gives \(\bm e_\ell^\dagger M_\ell\bm e_\ell\), while the
off-diagonal blocks give the two cross terms.

If the effective quadratic observable scale satisfies $\alpha_{M_\ell}^{\rm eff}=O(h_\ell^{-\chi}),
\qquad
\chi\le 2,$
and \(\norm{P_\ell}_2=O(1)\), then the extended observable has scale
\begin{equation}
\alpha_{M_\ell^{\rm ext}}^{\rm eff}
=
O(h_\ell^{2-\chi}).
\label{eq:M-ext-scale}
\end{equation}
Thus, if a QLSA could prepare the normalized extended state
\(
\ket{z_\ell}
=
\frac{\bm z_\ell}{Z_\ell},
\qquad
Z_\ell:=\norm{\bm z_\ell}_2,
\) efficiently,
then the physical quadratic level difference would be recovered as
\(
\Delta_\ell^{\rm quad}
=
Z_\ell^2
\bra{z_\ell}M_\ell^{\rm ext}\ket{z_\ell}.
\)
The norm \(Z_\ell\) must be estimated separately.  Since 
\(
Z_\ell^2
=
h_\ell^{-4}\norm{\bm e_\ell}_2^2+\norm{\bm x_{\ell-1}}_2^2
=
O(1),
\)
the norm estimation adds precision dependence but no additional power of \(h_\ell\).  By
\Cref{lem:rall-estimation}, the scale \eqref{eq:M-ext-scale} would then give the same
level readout improvement as in the Schur-complement route.

The difficulty, however, is the linear-system step.  A naive block encoding of \(S_\ell\) is not
well scaled: the blocks carry different powers of \(h_\ell\), and the scaled correction
component may introduce an unfavorable condition number if treated as an ordinary linear
system.  A preconditioner for \eqref{eq:scaled-spd-extended-system} must therefore do more
than condition the solve; it must also allow preparation of the physical scaled state
\(\ket{z_\ell}\), or an equivalent readout representation in which the small factor
\(h_\ell^2\) in \eqref{eq:linear-ext-readout}--\eqref{eq:M-ext-scaled} remains visible to
the measurement procedure.  This is not automatic for a BPX-frame QLSA,
we therefore regard the scaled two-level QLSA formulation as a promising open route rather
than as a completed algorithm in this paper.  
\section{Summary and discussions}
\label{sec:summary-discussion}

This paper studied the end-to-end complexity of quantum elliptic PDE solvers.  The central point of this paper is that the simulation stage and
the inference stage should be co-designed.  By coupling neighboring finite element levels
before measurement, the quantum circuit can see the scale of the interlevel correction
rather than the scale of the full fine-grid observable.

We developed a Schur-complement approach in which the corrected Green's operator avoids
solution-state preparation and expresses the level increment through a two-level
Ritz-complement and Schur-complement.  Under the Ritz-complement oracle
assumptions, this exposes the \(O(h_\ell^2)\) scale of the finite element correction
inside the block-encoded observable.  As a result, for observables up to the energy level,
the polynomial $h$-dependent readout overhead is removed up to logarithmic factors.
With amplitude estimation, the remaining dependence on the statistical precision is
\(\widetilde O(\epsilon^{-1})\), i.e., Heisenberg scaling for the inference stage; with
direct sampling, the dependence is the usual \(\widetilde O(\epsilon^{-2})\).  This is
the sense in which the approach moves from a linear-system subroutine toward a more
complete prepare--simulate--infer (PSI) pipeline. 

From a circuit perspective, the multilevel estimator reduces how often expensive
fine-level circuits must be executed.  Amplitude estimation repeats controlled
state-preparation and block-encoding circuits a number of times proportional to the scale
of the observable being estimated; standard sampling repeats the measurement circuit
according to the corresponding variance.  By inserting the fine--coarse cancellation
before measurement, the level-\(\ell\) circuit measures a small correction rather than a
full fine-grid observable.  This reduces the total number of fine-level oracle calls and
measurements.  When the Ritz-complement oracle is local or two-level, it may also reduce
the depth and structural complexity of the circuit used at that level. Similar to \cite{yang2025circuitefficientrandomizedquantumsimulation} our current observable driven approach also reduces the overall complexity.

The work also points to a broader connection with finite element multilevel theory.  The
Ritz-complement construction used here is only one way of exposing a stable interlevel
correction.  The finite element literature contains many powerful constructions of
preconditioners, hierarchical complements, prewavelets, approximate wavelets,and  localized
 orthogonal decompositions.  These methods were
developed for classical stability and preconditioning, but they may also provide
new block-encoding primitives for quantum observable estimation.  Understanding which of
these classical constructions lead to small-normalization quantum oracles is an important
direction for future work.

The same readout principle also appears naturally in spectral discretizations.  Recent
quantum spectral-filtering approaches exploit Fourier-space filters to solve structured
second-order PDEs by block encoding the inverse response in the diagonal spectral basis
\cite{HuangAntonioliBarbaresco2026}.  In such settings, dyadic frequency bands provide an
especially transparent version of our construction: energy orthogonality is automatic, and
the inverse restricted to a band with frequency scale \(h_\ell^{-1}\) has size
\(O(h_\ell^2)\).  Thus the multilevel readout idea can be viewed as inserting dyadic
Green's filters directly into the observable estimator, rather than using spectral
filtering only to prepare a normalized solution state.

Several other extensions are natural.    Nonsymmetric or indefinite PDEs may require
 saddle-point analogues of the present Schur-complement
formulation.  Time-dependent problems, including diffusion and wave equations are another important direction.  In all of these settings, the
main lesson is the same: quantum PDE algorithms should be analyzed end-to-end, with the
quantity of interest and the measurement procedure built into the algorithm from the
get go.

\section*{Acknowledgments}
This work was supported by NSF Grant DMS-2411120.

\bibliographystyle{plain}
\bibliography{ml_pde_refs}

\end{document}